\newtheorem{thm}{Theorem}[section]
\newtheorem{lem}[thm]{Lemma}
\newtheorem{prop}[thm]{Proposition}
\theoremstyle{definition}
\theoremstyle{remark}
\newtheorem{rem}[thm]{Remark}
\newtheorem*{ex}{Example}
\numberwithin{equation}{section}
\newcommand{\BibTeX}{B\kern-0.1emi\kern-0.017emb\kern-0.15em\TeX}
\newcommand{\XYpic}{$\mathrm{X\kern-0.3em\raisebox{-0.18em}{Y}}$-$\mathrm{pic}\,$}
\newcommand{\cl}{C \kern -0.1em \ell}  %%Clifford algebra
\newcommand{\ed}{\end{document}}
\begin{document}

%-------------------------------------------------------------------------
% editorial commands: to be inserted by the editorial office
%
%\firstpage{1} \volume{228} \Copyrightyear{2004} \DOI{003-0001}
%
%
%\seriesextra{Just an add-on}
%\seriesextraline{This is the Concrete Title of this Book\br H.E. R and S.T.C. W, Eds.}
%
% for journals:
%
%\firstpage{1}
%\issuenumber{1}
%\Volumeandyear{1 (2004)}
%\Copyrightyear{2004}
%\DOI{003-xxxx-y}
%\Signet
%\commby{inhouse}
%\submitted{March 14, 2003}
%\received{March 16, 2000}
%\revised{June 1, 2000}
%\accepted{July 22, 2000}
%
%
%
%---------------------------------------------------------------------------
%Insert here the title, affiliations and abstract:
%

\title[Lorentz Invariance of the Multidimensional Dirac--Hestenes Equation]
 {Lorentz Invariance of the Multidimensional Dirac--Hestenes Equation}
%----------Author 1
\author[S.~Rumyantseva]{Sofia Rumyantseva}
\address{%
HSE University, 101000\\ Moscow, Russia
}
\email{srumyanceva@hse.ru}
%
% \thanks{This file has been typeset with the option \texttt{draft} to illustrate that feature and its purpose.}
%----------Author 2
\author[D.~Shirokov]{Dmitry Shirokov}
%\author[]{Rafa\l \ Ab\l amowicz}
\address{%
HSE University, 101000\\ Moscow, Russia;\\
 Institute for Information Transmission Problems\\ of the Russian Academy of Sciences, 127051\\ Moscow, Russia}
\email{dshirokov@hse.ru, shirokov@iitp.ru}
%----------classification, keywords, date
\subjclass{Primary 
35Q41, 81Q05, 15A66; Secondary 70S15, 81T13}
\keywords{geometric algebra, Dirac--Hestenes equation, Lorentz invariance, Dirac equation, Clifford algebra}
\date{\today}
%----------additions
%\dedicatory{Last Revised:\\ \today}
%%% ----------------------------------------------------------------------
\begin{abstract}
This paper investigates the Lorentz invariance of the multidimensional Dirac–Hestenes equation, that is, whether the equation remains form-invariant under pseudo-orthogonal transformations of the coordinates. We examine two distinct approaches: the tensor formulation and the spinor formulation. We first present a detailed examination of the four-dimensional Dirac--Hestenes equation, comparing both transformation approaches. These results are subsequently generalized to the multidimensional case with $(1,n)$ signature. The tensor approach requires explicit invariants, while the spinor formulation naturally maintains Lorentz covariance through spin group action.
\end{abstract}
\label{page:firstblob}
%%% ----------------------------------------------------------------------
\maketitle
%%% ----------------------------------------------------------------------

\section{Introduction}

Geometric algebra has emerged as a powerful mathematical tool for physics, offering a unified language for describing rotations, reflections, and other transformations in arbitrary dimensions \cite{Benn,hestenes2012clifford,Hestenes2,Lasenby}. Its coordinate-free nature allows for more elegant and geometrically intuitive representations of physical laws, making it especially suitable for studying Lorentz-invariant equations such as the Dirac equation. 

The study of Lorentz transformations in the context of the Dirac equation can be approached in two distinct ways: the tensor approach and the spinor approach. The tensor approach has been previously employed in the context of the four-dimensional Dirac equation formulated in a matrix representation \cite{Sommerfeld}. In this approach, the Dirac matrices undergo transformations under the pseudo-orthogonal group ${\rm O}(1,3)$, while the wave function remains invariant. Crucially, the anticommutation relations of the Dirac matrices coincide with those of the generators of the geometric algebra $\cl_{1,3}$. This equivalence permits the reformulation of the Dirac equation within the geometric algebra formalism by substituting the matrices with the corresponding generators. In this setting, the wave function resides in the minimal left ideal of the algebra \cite{lounesto,Benn,Riesz}. Consequently, when extending the tensor approach to the multidimensional Dirac equation in geometric algebra, it follows that the generators of the algebra transform under the action of the pseudo-orthogonal group. However, this method requires the imposition of additional invariants to maintain the structural consistency of the equation.

In contrast, the spinor approach applied to the Dirac equation treats the wave function itself as a geometric object, transforming it via multiplication by an element of the spin group. This method inherently fixes the matrix representation, in other words, gamma-matrices or generators of the geometric algebra remain unchanged. Due to its conceptual elegance and mathematical efficiency, the spinor approach has become the more widely adopted framework in modern treatments of relativistic wave equations \cite{andoni2024spin,Benn}.

Within the geometric algebra formalism, the Dirac--Hestenes equation serves as a real-valued analogue to the Dirac equation \cite{Hestenes,lounesto,RumShir}. This formulation leverages multivectors in geometric algebra to represent spinors, eliminating the need for matrix representations while preserving the physical concept.

In this paper, we investigate the Lorentz invariance properties of the multidimensional Dirac--Hestenes equation through both tensor and spinor approaches. Section \ref{sec:GA} introduces the foundational geometric algebra framework and establishes the relationship between pseudo-orthogonal and spin groups. Section \ref{sec:DH} presents a rigorous proof of Lorentz invariance for the four-dimensional Dirac--Hestenes equation within the tensor approach, while also reviewing the established spinor approach. The subsequent Section \ref{secMDH} introduces the multidimensional the Dirac--Hestenes equation and explores the connection between it and the Hermitian idempotent. Section \ref{secLI} conducts an analysis of Lorentz invariance for this generalized equation under both transformation schemes. Finally, Section \ref{sec:Con} summarizes our main findings.

\section{Geometric algebra formalism}
\label{sec:GA}
One of the fundamental tools for studying problems in modern mathematical physics is geometric algebra \cite{hestenes2012clifford,Hestenes2,Lasenby}. In this paper, we consider the real geometric (Clifford) algebra $\cl_{1,n}$, generated by the elements $e^0,e^1,\ldots,e^n$, which satisfy the anticommutation relations 
\begin{equation}
    \label{generator}
    e^{\mu} e^{\nu}+e^{\nu} e^{\mu} =2\eta^{\mu\nu}e, \quad  \mu,\nu\in\{0,1,\ldots,n\},
\end{equation}
where $e$ is the identity element and $\eta$ is the diagonal matrix, where the first element is $1$ and the remaining elements on the main diagonal are $-1$:
\begin{equation}
\label{diagonalmatrix}
    \eta=(\eta^{\mu\nu})_{\mu,\nu=0}^n=\operatorname{diag}(1,-1,-1,\ldots,-1).
\end{equation}

The basis of the considering geometric algebra $\cl_{1,n}$ consists of all possible ordered products of the generators: 
\[e^{\mu_1}e^{\mu_2}\cdots e^{\mu_k}=e^{\mu_1 \mu_2\ldots \mu_k},\quad 0 \leq \mu_1<\mu_2<\cdots<\mu_k\leq n.\]

Thus, any element $U\in\cl_{1,n}$, referred to as a multivector, can be decomposed on this basis:
\[U = u e +\sum_{\mu=0}^n u_{\mu} e^{\mu} +\sum_{\mu,\nu=0,\mu< \nu}^n u_{\mu\nu} e^{\mu\nu}+\cdots+u_{01\ldots n} e^{01\ldots n},\]
where $u,u_{\mu},u_{\mu\nu},\ldots,u_{01\ldots n}$ are real scalars. Using multi-indices, this expansion can be rewritten in a more compact form:
\begin{equation}
\label{U=sum}
    U = \sum_{M} u_{M} e^M, \quad u_M\in\mathbb{R},
\end{equation}
where $M= \mu_1 \mu_2\ldots \mu_k$. The length of the multi-index $M$ is denoted by $|M|=k$, where $k=0,1,\ldots,n+1$. We denote the identity element $e$ by the element $e^M$, where $M$ is empty multi-index with $|M|=0$. If a multi-index has even length, it is called an even multi-index, otherwise it is called an odd multi-index. Note, the dimension of $\cl_{1, n}$ is $2^{n+1}$.

In Sections \ref{secMDH} and \ref{secLI}, we focus on subalgebras of $\cl_{1,n}$ generated by specific sets of basis elements. For instance, if a subalgebra is constructed by the generators with odd indices, we denote it as:  
\[\cl(e^1,e^3,e^5,e^7,\cdots)	\subset \cl_{1,n}.\]

The even subalgebra $\cl_{1,n}^{(0)}$ is defined as the linear span of all basis elements with even multi-indices:  
\[\cl_{1,n}^{(0)}=\{U\in \cl_{1,n}| U = \sum_{|M|=2k} u_M e^M\}, \quad \operatorname{dim}\cl_{1,n}^{(0)} = 2^{n}. \]
An element of the even subalgebra $\cl_{1,n}^{(0)}$  is called an even element.

Similarly, the odd subspace $\cl_{1,n}^{(1)}$ consists of all basis elements with odd multi-indices: 
\[\cl_{1,n}^{(1)}=\{U\in \cl_{1,n}| U = \sum_{|M|=2k-1} u_M e^M\}, \quad \operatorname{dim}\cl_{1,n}^{(1)} = 2^{n}.\]
An element of the odd subspace $\cl_{1,n}^{(1)}$  is called an odd element. Actually, one of the key subspaces of $Cl^{(1)}_{1,n}$ is the vector subspace $Cl^1_{1,n}$, which basis consists of all generators:
$$Cl^1_{1,n}=\{U\in \cl_{1,n}| U = \sum_{\mu=0}^{n} u_{\mu} e^{\mu}\}.$$

Let us define the group of all invertible elements of $\cl_{1,n}$:
$$Cl^{\times}_{1,n}=\{U\in \cl_{1,n}|\,\exists U^{-1}\in \cl_{1,n}:\,  UU^{-1} = U^{-1}U=e\}.$$
However, the symbol $\times$ applied to any set denotes the subset consisting of all invertible elements of that set.

In this paper, we also consider the complexified geometric algebra $\mathbb{C}\otimes\cl_{1,n}$. The decomposition of any element $U\in\mathbb{C}\otimes\cl_{1,n}$ has the same form as in \eqref{U=sum}, but the coefficients $\{u_M\}$ are complex scalars.

The wave function, that is a solution to the multidimensional Dirac--Hestenes equation, depends on Cartesian coordinates $\{x^{\mu}\}_{\mu=0}^n$ of pseudo-Euclidean space $\mathbb{R}^{1,n}$.
The metric tensor $\mathbb{R}^{1,n}$ is represented by diagonal matrix $\eta$ \eqref{diagonalmatrix}.

We also verify the invariance of the multidimensional Dirac--Hestenes equation under Lorentz transformations. It is defined as a transformation of the coordinates $\{x^{\mu}\}$ via multiplication by a matrix $P$ of an appropriate pseudo-orthogonal group:
$$x^{\mu} \mapsto \hat{x}^{\mu} = p_{\nu}^{\mu} x^{\nu},\quad P=(p_{\nu}^{\mu})_{\nu,\mu=0}^n.$$
Specifically, when $n=2d$, the matrix $P$ belongs to the indefinite pseudo-orthogonal group ${\rm O}(1,n)$:
$${\rm O}(1,n) = \{A\in \operatorname{Mat}(n+1,\mathbb{R})|A^{\top}\eta A = \eta\},$$
whereas for $n=2d-1$, it belongs to the indefinite special pseudo-orthogonal group ${\rm SO}(1,n)$:
$${\rm SO}(1,n) = \{A\in {\rm O}(1,n)|\operatorname{det}A= 1\}.$$

The specific form of the transformation depends on the parity of $n$ 
because the Lorentz group in multidimensional spacetime has different spinor representations in even and odd dimensions \cite{kaku1993quantum,borvstnik2002generate}. 
A fundamental connection exists between pseudo-orthogonal groups in $\operatorname{Mat}(n+1,\mathbb{R})$ and spin groups in $\cl_{1,n}$ \cite{Traubenberg,shirokov2013}. While pseudo-orthogonal groups describe Lorentz transformations in terms of coordinate transformations, spin groups provide an alternative perspective by acting directly on the wave function. Instead of transforming the basis elements of the geometric algebra, we multiply the wave function by an element of the spin group.

To describe a spin group, we introduce the reversion operation:
\begin{equation}
\label{reversion}
\widetilde{U}=\sum_{M} (-1)^{\frac{|M|(|M|-1)}{2}}u_{M} e^M, \quad U\in \cl_{1,n}.
\end{equation}

In the paper, we consider two classes of spin groups:
$${\rm Pin}(1,n) =\{T\in\cl^{(0)\times}_{1,n}\cup \cl^{(1)\times}_{1,n}|\,\forall x\in\cl^1_{1,n} \, T x T^{-1}\in \cl^1_{1,n};\,T^{-1}=\pm\widetilde{T}\}; $$
$${\rm Spin}(1,n) = \{T\in\cl^{(0)\times}_{1,n}|\,\forall x\in\cl^1_{1,n} \, T x T^{-1}\in \cl^1_{1,n};\,T^{-1}=\pm\widetilde{T}\}.$$
The group ${\rm Pin}(1,n)$ includes both even and odd elements of the geometric algebra, while the group ${\rm Spin}(1,n)$ consists only of even elements. These groups serve as the double covers of the corresponding pseudo-orthogonal groups.

To establish the relationship between spin and pseudo-orthogonal groups, we introduce the adjoint action of a spin group:
$${\rm ad}_T x = T x T^{-1},\quad x\in\cl^1_{1,n},$$
where $T$ belongs to the corresponding spin group.

\begin{lem}
\label{hom}
    The homomorphisms
    $${\rm ad}: {\rm Pin}(1,2d-1)\to {\rm O}(1,2d-1),$$
    $${\rm ad}:{\rm Spin}(1,2d)\to {\rm SO}(1,2d)$$
     are surjective with a kernel $\{\pm1\}$.
\end{lem}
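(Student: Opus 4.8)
The plan is to verify, in order, that ${\rm ad}$ is a well-defined homomorphism into the relevant orthogonal group, that its kernel is exactly $\{\pm1\}$, and that it is surjective; the last point carries the real content. First I would record why ${\rm ad}_T$ lands in the orthogonal group. For $x=\sum_\mu x_\mu e^\mu \in \cl^1_{1,n}$ the anticommutation relations \eqref{generator} give $x^2=\eta(x,x)\,e$ with $\eta(x,x)=\sum_\mu \eta^{\mu\mu}x_\mu^2$, and more generally $vx+xv=2\eta(v,x)\,e$. Hence $({\rm ad}_T x)^2=Tx^2T^{-1}=\eta(x,x)\,e$, and since ${\rm ad}_T x\in\cl^1_{1,n}$ by the defining property of the spin group, the quadratic form is preserved: ${\rm ad}_T\in{\rm O}(1,n)$. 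The identity ${\rm ad}_{ST}={\rm ad}_S\circ{\rm ad}_T$ is immediate, so ${\rm ad}$ is a homomorphism; the determinant bookkeeping below will show that even elements are sent to ${\rm SO}$.

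Next, the kernel. An element $T$ lies in $\ker{\rm ad}$ iff $Tx=xT$ for every generator, i.e.\ iff $T$ is central in $\cl_{1,n}$. The center is $\BR e$ when the number of generators $n+1$ is even, and $\BR e\oplus\BR\,e^{01\ldots n}$ when $n+1$ is odd. For ${\rm Pin}(1,2d-1)$ we have $n+1=2d$ even, so $T=\lambda e$; for ${\rm Spin}(1,2d)$ we have $n+1=2d+1$ odd, but the pseudoscalar $e^{01\ldots n}$ has odd grade $2d+1$ and is excluded from the even subalgebra, again leaving $T=\lambda e$. In both cases the normalization $T^{-1}=\pm\widetilde T$ forces $\lambda^{-1}=\pm\lambda$, whence $\lambda^2=1$ and $T=\pm e$. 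Thus $\ker{\rm ad}=\{\pm1\}$.

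The main work is surjectivity, and the key computation is the adjoint action of a single non-null vector. For $v\in\cl^1_{1,n}$ with $v^2=\pm e$ one checks $vvv^{-1}=v$, while $vxv^{-1}=-x$ whenever $\eta(v,x)=0$; thus ${\rm ad}_v$ fixes $v$ and negates the hyperplane $v^{\perp}$, i.e.\ ${\rm ad}_v=-R_v$, where $R_v$ is the ordinary reflection in $v^{\perp}$. Such a $v$ is odd and satisfies $v^{-1}=\pm\widetilde v$, so $v\in{\rm Pin}(1,n)$, and a product of an even number of such vectors is even, hence lies in ${\rm Spin}(1,n)$. The sign in ${\rm ad}_v=-R_v$ is the delicate point: on taking products the stray factors of $-1$ cancel in pairs, so ${\rm ad}_{v_1\cdots v_{2k}}=R_{v_1}\cdots R_{v_{2k}}$ is an arbitrary even product of reflections. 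By the Cartan--Dieudonn\'e theorem every element of ${\rm SO}(1,n)$ is such an even product of reflections in non-null vectors, giving surjectivity of ${\rm ad}:{\rm Spin}(1,2d)\to{\rm SO}(1,2d)$. For the Pin case I would add that ${\rm ad}_v=-R_v$ has determinant $(+1)(-1)^{2d-1}=-1$, so the image of ${\rm ad}:{\rm Pin}(1,2d-1)\to{\rm O}(1,2d-1)$ contains all of ${\rm SO}(1,2d-1)$ (from even products) together with an element of determinant $-1$, and therefore equals ${\rm O}(1,2d-1)$. The anticipated obstacle is exactly keeping the determinant and the $-R_v$ sign consistent across the even/odd dimensional split, since this is what separates the ${\rm O}$ and ${\rm SO}$ targets and makes the Pin/Spin dichotomy of the statement come out correctly.
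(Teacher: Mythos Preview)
The paper does not supply a proof of this lemma at all; it is stated as a known result, with the relationship between spin and pseudo-orthogonal groups attributed to the references \cite{Traubenberg,shirokov2013}. So there is no ``paper's own proof'' to compare against.

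Your argument is the standard one (centre computation for the kernel, Cartan--Dieudonn\'e for surjectivity) and it is correct. One small point worth tightening: you announce that ``the determinant bookkeeping below will show that even elements are sent to ${\rm SO}$,'' but the bookkeeping you actually carry out concerns products of unit vectors, whereas the paper's definition of ${\rm Spin}(1,n)$ is the direct one (even invertible $T$ with $TxT^{-1}\in\cl^1_{1,n}$ and $T^{-1}=\pm\widetilde T$), not ``even products of unit vectors.'' To make the codomain ${\rm SO}$ honest you need the Lipschitz/Clifford group characterization that every such $T$ is indeed a product of an even number of non-null vectors; this is standard, but it is an extra input beyond what you wrote. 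For the kernel and surjectivity claims of the lemma as stated, however, your argument is already complete.
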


Lemma \ref{hom} implies that the group ${\rm Pin}(1,2d-1)$ is a double cover of ${\rm O}(1,2d-1)$, while ${\rm Spin}(1,2d)$ serves as a double cover of ${\rm SO}(1,2d)$. It is worth noting that the homomorphism between ${\rm O}(1,2d)$ and ${\rm Pin}(1,2d)$ is also surjective. However, its kernel would be the set $\{\pm 1, \pm e^{01\ldots 2d}\}$. For this reason, when studying the invariance of equations under Lorentz transformations, one typically considers the subgroup ${\rm SO}(1,2d)$ in the case of odd-dimensional spaces.

\begin{rem}
    Throughout this paper, we use Einstein notation.  In this convention, repeated indices in an expression are implicitly summed over their allowable range. For instance:
    $$\sum_{\mu=0}^n A_{\mu} B^{\mu}=A_{\mu} B^{\mu},\quad \mu=0,1,\ldots,n.$$
\end{rem}

From Lemma \ref{hom}, it follows that the formula
\begin{equation}
\label{TP}
    T e^{\mu} T^{-1} = p^{\mu}_{\nu} e^{\nu}, \nu,\mu=0,1,\ldots,n
\end{equation}
establishes a one-to-two correspondence between a single matrix $P=(p^{\mu}_{\nu})$ in the pseudo-orthogonal group and the elements $\pm T$ of the corresponding spin group.

Moreover, formula \eqref{TP} can be rewritten in an equivalent form. Since $T$ belongs to a spin group, its inverse $T^{-1}$ must also be an element of the same group. Thus, we can express the transformation as:
\begin{equation}
    \label{SP}
   S^{-1} e^{\mu} S = p^{\mu}_{\nu} e^{\nu}, \nu,\mu=0,1,\ldots,n.
\end{equation}

\begin{ex}
    Let us consider the element: 
    $$S=\frac{1}{\sqrt{2}}(e-e^{12})\in\cl^{(0)}_{1,3}.$$ 
    
The inverse of $S$ is given by:
$$S^{-1} = \frac{1}{\sqrt{2}}(e+e^{12}).$$ 

To verify that $S\in {\rm Pin}(1,3)$, first, we examine the connection between the inverse element and reversion of $S$:
$$\widetilde{S}=\frac{1}{\sqrt{2}}(e+e^{12})=S^{-1}.$$

Next, we examine the transformation of the basis elements under the adjoint action:
    $$S^{-1} e^{0} S = e^0,\quad S^{-1} e^{1} S = e^2,\quad S^{-1} e^{2} S=-e^1,\quad S^{-1} e^{3} S=e^3. $$

From these transformations, we construct the corresponding matrix:
\begin{equation*}
        P= \begin{pmatrix}
             1 &0 & 0&0\\
              0 &0 & 1&0\\
              0&-1&0&0\\
              0&0&0&1
         \end{pmatrix}.
     \end{equation*}
Since $P$ satisfies the condition
$$P^{\top}\eta P = \eta,$$
it follows that $P\in {\rm O}(1,3)$.
\end{ex}

\section{Dirac--Hestenes equation in Minkowski space}
\label{sec:DH}

In this section, we examine the special case $n=3$, starting with a brief overview of the foundational concepts. The Dirac--Hestenes equation is a real analog of the Dirac equation to describe the dynamics of a particle in an electromagnetic field \cite{Hestenes,lounesto}. Unlike the Dirac equation, which employs the imaginary unit $i$, the Dirac--Hestenes equation utilizes the element $I = -e^{12}$, ensuring that all terms in the equation remain real. This real-valued nature of the equation and its solutions may offer deeper insights into the underlying geometric principles.

The mass of the particle is denoted by $m$. For simplicity, we adopt natural units where the Planck constant, the particle's charge, and the speed of light are all set to unity. The electromagnetic vector-potential, denoted as $\textbf{a}(x)$, is a vector-valued function of the position $x$ in the Minkowski space $\mathbb{R}^{1,3}$. Specifically, $\textbf{a}(x)=(a_0(x),\ldots,a_3(x)):\mathbb{R}^{1,3}\to\mathbb{R}^{4}$.

The Dirac--Hestenes equation has the form:
\begin{equation}
\label{DH3}
    e^{\mu} (\partial_{\mu} \Psi(x)  + \Psi(x) a_{\mu}(x)I) E +m \Psi(x) I=0,\quad \mu=0,1,2,3,
\end{equation}
where $x\in\mathbb{R}^{1,3}$, $\partial_{\mu}= \partial/\partial x_{\mu}$, $\{e^\mu\}_{\mu=0}^3$ are generators \eqref{generator}, $E=e^0$, $I=-e^{12}$, and $\Psi(x):\mathbb{R}^{1,3} \to \cl^{(0)}_{1,3}$.

As noted in Section \ref{sec:GA}, spin groups are related to their corresponding pseudo-orthogonal groups. Hence, Lorentz transformation in the context of the Dirac equation can be considered by using either a tensor approach or a spinor approach \cite{Sommerfeld}. 
We verify both techniques for the Dirac--Hestenes equation. 
The first approach involves transforming the geometric algebra generators $e^{\mu}$. The second considers transforming the wave function $\Psi(x)$ using elements of the spin group ${\rm Pin}(1,3)$. It is well established that the spinor approach preserve the Dirac--Hestenes equation \cite{hestenes1990zitterbewegung,lounesto}. Moreover, the elements $E$ and $I$ remain invariant under Lorentz transformation. To facilitate a clearer understanding of the multidimensional case, we first present the proofs for the case $n=3$, as the essential steps generalize naturally to higher dimensions. We begin with the tensor approach.

\begin{rem}
Throughout this article, we omit the explicit dependence on $x$ for all functions in the Dirac--Hestenes equation to maintain a more concise presentation. It is important to note that under Lorentz transformation, any function $f(x)$ transforms into $\widehat{f}(\widehat{x})$. Therefore, when we write $f$, it should be understood as $f(x)$, and when we write $\widehat{f}$, it refers to $\widehat{f}(\widehat{x})$. The precise relationship between $f$ and $\widehat{f}$ depends on the chosen approach.
\end{rem}

The Dirac--Hestenes equation with transformed variables has the following form:
\begin{equation}
\label{transfDH3}
     \widehat{e^{\mu}} (\widehat\partial_{\mu} \widehat\Psi  + \widehat\Psi \widehat a_{\mu} I) E + m \widehat\Psi I=0,\quad \mu=0,1,2,3.
\end{equation}

\begin{thm}
\label{n3tensor}
Let $\mu,\,\nu=0,1,2,3$ and the transformations be defined as follows:
\begin{align*}
    &x^{\mu} \mapsto \widehat{x^{\mu}} = p_{\nu}^{\mu} x^{\nu}, \quad P=(p_{\nu}^{\mu})\in {\rm O}(1,3), \\
   & \partial_{\mu} \mapsto \widehat\partial_{\mu} = q_{\mu}^{\nu} \partial_{\nu}, \quad Q=(q_{\mu}^{\nu})=P^{-1}, \\
   & a_{\mu} \mapsto \widehat{a}_{\mu} = q_{\mu}^{\nu} a_{\nu}, \\
    &e^{\mu} \mapsto \widehat{e^{\mu}}=p_{\nu}^{\mu}e^{\nu}, \\
    &\Psi \mapsto \widehat\Psi = \Psi.
\end{align*}

We assume that $I$ and $E$ are invariant under transformations. If $\Psi$ is a solution to Dirac--Hestenes equation \eqref{DH3}, then $\widehat\Psi$ is a solution to transformed Dirac--Hestenes equation \eqref{transfDH3}.
\end{thm}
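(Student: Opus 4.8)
The plan is to begin from the transformed equation \eqref{transfDH3} and substitute every transformation rule, so that the matrices $P$ and $Q=P^{-1}$ contract to Kronecker deltas and the equation collapses back to \eqref{DH3}. Concretely, I would insert $\widehat{e^{\mu}}=p^{\mu}_{\nu}e^{\nu}$, $\widehat\partial_{\mu}=q^{\nu}_{\mu}\partial_{\nu}$, $\widehat a_{\mu}=q^{\nu}_{\mu}a_{\nu}$, and $\widehat\Psi=\Psi$, together with the hypotheses $\widehat E=E$ and $\widehat I=I$, into each term of \eqref{transfDH3}. Since every component $a_{\mu}$ is a real scalar, it commutes with all multivectors and may be moved past $\Psi$ and $I$ freely; this is the only place where the scalar (rather than multivector) nature of the potential is used.

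The heart of the computation is a single index contraction. Writing the derivative term as $p^{\mu}_{\nu}e^{\nu}\,q^{\sigma}_{\mu}\partial_{\sigma}\Psi\,E$ and grouping the scalars $p^{\mu}_{\nu}q^{\sigma}_{\mu}$, I would use $Q=P^{-1}$ in the form $QP=\mathrm{Id}$, that is $\sum_{\mu}q^{\sigma}_{\mu}p^{\mu}_{\nu}=\delta^{\sigma}_{\nu}$, to reduce this factor to $\delta^{\sigma}_{\nu}$. The same contraction applied to the potential term $p^{\mu}_{\nu}e^{\nu}\Psi\,q^{\rho}_{\mu}a_{\rho}I\,E$ collapses its two matrix factors to $\delta^{\rho}_{\nu}$. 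After these two reductions the transformed equation reads $e^{\nu}(\partial_{\nu}\Psi+\Psi a_{\nu}I)E+m\Psi I=0$, which is exactly \eqref{DH3} with the dummy index renamed; as $\Psi$ solves \eqref{DH3}, the claim follows.

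The computation itself is routine, so the genuine points of care are conceptual rather than technical. First, one must keep the index positions straight and recognize that the relevant contraction comes from $QP=\mathrm{Id}$ (giving $p^{\mu}_{\nu}q^{\sigma}_{\mu}=\delta^{\sigma}_{\nu}$) and not from $PQ=\mathrm{Id}$; this follows from the chain-rule derivation of $\widehat\partial_{\mu}=q^{\nu}_{\mu}\partial_{\nu}$ under $Q=P^{-1}$. Second, and more importantly, the hypotheses $\widehat E=E$ and $\widehat I=I$ are precisely what make the argument work: they are what permit the original (unhatted) elements $E$ and $I$ to appear in \eqref{transfDH3} and to survive the contraction unchanged. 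Were $E=e^{0}$ and $I=-e^{12}$ allowed to transform as the generators $e^{\mu}$ do, the contracted equation would contain $\widehat E$ and $\widehat I$ in place of $E$ and $I$ and would not coincide with \eqref{DH3}; thus the invariance of $E$ and $I$ is the essential structural constraint singled out by the tensor approach.
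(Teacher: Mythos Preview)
Your proposal is correct and follows essentially the same route as the paper: substitute all hatted quantities, contract $p^{\mu}_{\nu}q^{\alpha}_{\mu}=\delta^{\alpha}_{\nu}$ using $Q=P^{-1}$, and recognize the result as the original equation \eqref{DH3}. Your additional remarks on the scalar nature of $a_{\mu}$ and on why the invariance of $E$ and $I$ is essential are accurate elaborations but not part of the paper's (very short) argument.
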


\begin{rem}
In the tensor approach, the generators of the algebra are transformed. Consequently, the wave function $\Psi$ depends not only on the coordinates $\{x^{\mu}\}$ but also on the generators $\{e^{\mu}\}$. This implies that under Lorentz transformation, the wave function transforms as follows:
$$\widehat\Psi(\widehat{x^{\mu}},\widehat{e^{\mu}})=\Psi(x^{\mu},e^{\mu}).$$
More explicitly, substituting the transformed variables, we obtain:
$$\widehat\Psi(\widehat{x^{\mu}},\widehat{e^{\mu}}) = \Psi(q_{\nu}^{\mu}\widehat{x^{\nu}},q_{\nu}^{\mu}\widehat{e^{\nu}}).$$

\end{rem}

\begin{proof}
We denote the left-hand side of transformed Dirac--Hestenes equation~\eqref{transfDH3} by $\widehat F$. Substituting the transformed variables into $\widehat F$, we obtain:
 $$\widehat F=  p_{\nu}^{\mu}q_{\mu}^{\alpha}e^{\nu} ( \partial_{\alpha}\Psi  + \Psi a_{\alpha} I) E+ m \Psi  I,\quad \nu,\,\alpha=0,1,2,3.$$

    The matrix $Q$ is the inverse of $P$. Therefore, the connection $p_{\nu}^{\mu}q^{\alpha}_{\mu}=\delta_{\nu}^{\alpha}$ simplifies $\widehat F$ to:
     $$\widehat F= e^{\nu} ( \partial_{\nu}\Psi  + \Psi a_{\nu} I) E+ m \Psi I.$$

The resulting expression is original Dirac--Hestenes equation \eqref{DH3}, and since $\Psi$ satisfies it, we conclude that $\widehat F=0$.
   
\end{proof}

In the spinor approach, relation~\eqref{SP} is used. The spin group ${\rm Pin}(1,3)$ satisfies ${\rm ad}({\rm Pin}(1,3))={\rm O}(1,3)$, implying that for every $P\in {\rm O}(1,3)$, there exists $S\in {\rm Pin}(1,3)$. Instead of transforming the geometric algebra generators by an element $P$ from the pseudo-orthogonal group, they remain unchanged and a wave function is multiplied by an element $S$ from the spin group. However, the element $S$ can belong either to the even subalgebra $\cl^{(0)\times}_{1,3}$ or to the odd subspace $\cl^{(1)\times}_{1,3}$. The solution $\Psi$ to the Dirac--Hestenes equation is an even element, and multiplying it by an odd element produces an odd result. To address this, when $S$ belongs to the odd subspace, the transformation of $\Psi$ is modified as $S\Psi E$.

\begin{thm}
Let $\mu,\,\nu=0,1,2,3$ and the transformations be defined as:
\begin{align*}
    &x^{\mu} \mapsto \widehat{x^{\mu}} = p_{\nu}^{\mu} x^{\nu}, \quad P=(p_{\nu}^{\mu})\in {\rm O}(1,3), \\
   & \partial_{\mu} \mapsto \widehat\partial_{\mu} = q_{\mu}^{\nu} \partial_{\nu}, \quad Q=(q_{\mu}^{\nu})=P^{-1}, \\
   & a_{\mu} \mapsto \widehat{a}_{\mu} = q_{\mu}^{\nu} a_{\nu},\\
    &e^{\mu} \mapsto \widehat{e^{\mu}}=e^{\mu}, \\
    &\Psi \mapsto \widehat\Psi= \begin{cases}
        S \Psi,\quad S\in {\rm Spin}(1,3)\quad(i.e.\, P\in {\rm SO}(1,3))\\
        S \Psi E, \quad S\in {\rm Pin}(1,3)\setminus {\rm Spin}(1,3)
    \end{cases}
\end{align*}
where $P$ and $S$ are related by \eqref{SP}.

If $\Psi$ is a solution to Dirac--Hestenes equation \eqref{DH3}, then $\widehat\Psi$ is also a solution to transformed Dirac--Hestenes equation \eqref{transfDH3}.
\end{thm}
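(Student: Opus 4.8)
The plan is to treat the two branches in the definition of $\widehat\Psi$ separately, substituting the transformed data into the left-hand side $\widehat F$ of the transformed equation~\eqref{transfDH3} and, in each branch, factoring the spin element $S$ out on the left so as to recover the original equation~\eqref{DH3}.

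Before starting I would record the facts about $E=e^0$ and $I=-e^{12}$ that drive the computation: from the anticommutation relations~\eqref{generator} one gets $E^2=e$ and $EI=IE$, so that $E^{-1}=E$ and $EIE=I$. I would also use that $S$ and $E$ are independent of $x$, so differentiation passes through them; hence $\widehat\partial_\mu\widehat\Psi=q_\mu^\nu S(\partial_\nu\Psi)$ in the even branch and $q_\mu^\nu S(\partial_\nu\Psi)E$ in the odd branch. The single algebraic tool used throughout is~\eqref{SP} in the form $e^\mu S=S\,p^\mu_\nu e^\nu$, combined with $Q=P^{-1}$, which contracts the two matrices via $q_\mu^\nu p^\mu_\alpha=\delta^\nu_\alpha$.

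In the even branch $\widehat\Psi=S\Psi$ (with $\widehat{e^\mu}=e^\mu$) the argument is clean and term-by-term. Substituting and collecting the factors $q_\mu^\nu$ coming from $\widehat\partial_\mu$ and $\widehat a_\mu$, each generator string $e^\mu q_\mu^\nu S$ becomes $S e^\nu$ after transporting $S$ leftward by $e^\mu S=S p^\mu_\alpha e^\alpha$ and contracting $q_\mu^\nu p^\mu_\alpha=\delta^\nu_\alpha$. Because $E$ and $I$ lie to the right of $S$ and are never moved, this gives $\widehat F=S\big[e^\nu(\partial_\nu\Psi+\Psi a_\nu I)E+m\Psi I\big]=S\cdot 0=0$.

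The odd branch $\widehat\Psi=S\Psi E$ is where I expect the real difficulty, and it is the step I would flag as the main obstacle: the extra factor $E$ carried by $\widehat\Psi$ now interacts with the trailing $E$ of~\eqref{transfDH3} and with $I$, so the $E$-bookkeeping no longer closes term by term. After the same leftward transport of $S$, the identities $E^2=e$ and $EIE=I$ collapse the derivative and potential terms to $S\,e^\nu(\partial_\nu\Psi+\Psi a_\nu I)$ — which is missing the factor $E$ present in~\eqref{DH3} — while the mass term becomes $mS\Psi IE$ — which carries an extra factor $E$. Rather than matching terms individually, I would feed in the original equation in the rearranged form $e^\nu(\partial_\nu\Psi+\Psi a_\nu I)=-m\Psi IE$, valid since $E^{-1}=E$: substituting it into the collected derivative-plus-potential term produces $-mS\Psi IE$, which cancels the mass contribution and yields $\widehat F=0$.
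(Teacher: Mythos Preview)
Your proof is correct and uses the same ingredients as the paper's: relation~\eqref{SP} in the form $e^\mu S=S\,p^\mu_\alpha e^\alpha$, the contraction $q_\mu^\nu p^\mu_\alpha=\delta^\nu_\alpha$, and the identities $E^2=e$, $EI=IE$. The even branch is identical to the paper's argument.

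In the odd branch you take a slightly different, though equivalent, tack. The paper factors $S$ on the left \emph{and} $E$ on the right simultaneously, obtaining
\[
\widehat F \;=\; S\Big(q_\mu^\alpha S^{-1}e^\mu S\,(\partial_\alpha\Psi+\Psi a_\alpha I)E + m\Psi I\Big)E
\;=\; S\Big(e^\nu(\partial_\nu\Psi+\Psi a_\nu I)E + m\Psi I\Big)E,
\]
so that the bracket is literally the left-hand side of~\eqref{DH3} and vanishes. You instead let the two inner $E$'s collapse via $E^2=e$, arrive at $\widehat F=S\,e^\nu(\partial_\nu\Psi+\Psi a_\nu I)+mS\Psi IE$, and then invoke~\eqref{DH3} in the rearranged form $e^\nu(\partial_\nu\Psi+\Psi a_\nu I)=-m\Psi IE$. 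Both are valid; the paper's factorization is marginally cleaner because it recovers the original equation verbatim, so the ``obstacle'' you flag is in fact no harder than the even case once one notices that an $E$ can be pulled out on the right by inserting $E\cdot E=e$.
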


\begin{proof}
We begin by considering the case $P\in {\rm SO}(1,3)$, in other words $S\in{\rm Spin}(1,3)$. As in the proof of the previous theorem, let us denote the left-hand side of transformed Dirac--Hestenes equation \eqref{transfDH3} by $\widehat F$. Substituting the transformed variables into $\widehat F$, we obtain:  
$$\widehat F = q_{\mu}^{\alpha}e^{\mu}S\left( \partial_{\alpha} \Psi  + \Psi a_{\alpha}I\right)   E+ m S\Psi I,\quad \alpha=0,1,2,3.$$

Note that $S$ may not commute with the generators $\{e^{\mu}\}$. Therefore, we factor it out of the sums in the following way:
$$\widehat F =S\left( q_{\mu}^{\alpha}S^{-1}e^{\mu}S\left( \partial_{\alpha} \Psi  + \Psi a_{\alpha}I\right)   E+ m \Psi I\right).$$

Using relation \eqref{SP} between $S\in {\rm Spin}(1,3)$ and $P\in {\rm SO}(1,3)$, we get:
$$\widehat F =S\left( q_{\mu}^{\alpha} p_{\nu}^{\mu}e^{\nu}\left( \partial_{\alpha} \Psi  + \Psi a_{\alpha}I\right) E+ m \Psi I\right),\quad \nu=0,1,2,3.$$

Since $Q$ is the inverse of $P$, the expression simplifies to:
$$\widehat F =S\left( e^{\nu}\left( \partial_{\nu} \Psi  + \Psi a_{\nu} I\right)  E+ m \Psi I\right).$$

The term in external parentheses corresponds to original Dirac--Hestenes equation \eqref{DH3}, which equals zero because $\Psi$ is a solution to it. Then, we we conclude that $\widehat F=0$.

In the case $S\in {\rm Pin}(1,3)\setminus {\rm Spin}(1,3)$, the transformation of $\Psi(x)$ includes an additional factor $E$. Hence, $\widehat F$ has the form:
 $$\widehat F =  q_{\mu}^{\alpha}e^{\mu}S\left( \partial_{\alpha} \Psi E  + \Psi Ea_{\alpha}I\right)   E+ m S\Psi E I.$$

Note that $I$ and $E$ commute. Moving $E$ to the right within the parentheses and $S$ to the left, we rewrite $\widehat F$ as:
       \begin{equation*}
            \widehat F = S\left( q_{\mu}^{\alpha}S^{-1}e^{\mu}S\left( \partial_{\alpha} \Psi  + \Psi a_{\alpha}I\right) E + m \Psi  I\right)E.
    \end{equation*}

Once again, the term in external parentheses is original Dirac--Hestenes equation \eqref{DH3}, which vanishes. Thus, we have $\widehat F = 0$.
\end{proof}

 \begin{rem}
An alternative transformation approach exists for the four-dimensional Dirac--Hestenes equation \cite{lounesto}. The generators of the algebra remain unchanged while the variables transform as follows:
$$\hat{I}=SIS^{-1},\quad \hat{E} = S E S^{-1},\quad\hat{\Psi} = S\Psi S^{-1},$$
where $S\in {\rm Spin}_{+}(1,3)$. However, the main difficulty lies in the nontrivial form of $S \Psi$ in the multidimensional case (see the details in Proposition \ref{SPsi}), which makes it challenging to generalize this approach beyond four dimensions. Since this perspective is not the primary focus of our study, we do not extend it to the multidimensional case in the present work. Though it remains an interesting direction for future research.
 \end{rem}

Overall, the Dirac--Hestenes equation is invariant under Lorentz transformations. Both the tensor and spinor approaches can be applied as in the case of the classical Dirac equation formulated in the matrix formalism. In Section \ref{secLI}, we demonstrate that the multidimensional Dirac--Hestenes equation also admits both approaches.  

\section{Multidimensional Dirac--Hestenes equation}
\label{secMDH}

Let us recall some key aspects of the multidimensional Dirac--Hestenes equation. This equation describes the dynamics of particles in higher-dimensional spaces, extending the traditional Dirac equation to a real-valued formulation~\cite{RumShir}. As in the previous section, we denote the mass of the particle by $m$ and assume that the Planck constant, the charge of the particle, and the speed of light are equal to $1$. In the multidimensional case, we consider the pseudo--Euclidean space $\mathbb{R}^{1,n}$. The electromagnetic vector-potential $\textbf{a}(x)$ is defined as:
$$\textbf{a}(x)=(a_0(x),\ldots,a_n(x)):\mathbb{R}^{1,n}\to\mathbb{R}^{n+1}.$$

In the context of the Dirac--Hestenes equation or the Dirac equation, solutions can be classified into three different types based on the dimensionality of the space \cite{Benn,BrauerWeyl}. Notably, the form of the Dirac--Hestenes equation depends on the type of solution if $n>3$. For $n=2d-1$, a solution is a spinor. When $n=2d$, 
the solution can be either a semi-spinor or a double spinor. A semi-spinor is an element of one of the two irreducible subspaces of the full spinor space in even dimensions. In the other hand, a double spinor is a direct sum of the two semi-spinor representations.

Let us recall the form of the multidimensional Dirac--Hestenes equation for all solution types \cite{RumShir}. We begin with the case $n=2d-1$. A solution to the Dirac--Hestenes equation belongs to the even subalgebra of algebra $Q'$:
\begin{equation}
        Q' = \cl(e^0,e^1,e^2,e^3,e^5,e^7,\ldots,e^{2d-1})\subset \cl_{1,2d-1},\label{Q1}
    \end{equation}
which is constructed by the generators with odd indices, $e^0$, and $e^2$. We denote the even subalgebra of $Q'$ by $Q'^{(0)} = Q'\cap \cl^{(0)}_{1,n}$.

The multidimensional Dirac--Hestenes equation takes the form: 
\begin{equation}
    \begin{aligned}
        \label{Dir-Hest}
   & \sum_{\mu = 0,1,2,3,5,7,\ldots,2d-1}e^{\mu} (\partial_{\mu} \Psi+ \Psi a_{\mu} I) E \\
    &+\sum_{\mu = 3,5,\ldots,2d-3} (\partial_{\mu+1} \Psi + \Psi a_{\mu+1} I ) e^{\mu} E I +m\Psi I=0,
    \end{aligned}
\end{equation}
where $\Psi(x): \mathbb{R}^{1,n}\to Q'^{(0)}$, $I=-e^{12}$, and $E=e^0$. Note that the index of the first summation in \eqref{Dir-Hest} is odd or equal to $0$, $2$.

Next, we consider the case of a semi-spinor. The algebra $Q'$ is constructed in the same way as in the spinor case:
\begin{equation}
        Q' = \cl(e^0,e^1,e^2,e^3,e^5,e^7,\ldots,e^{2d-1})\subset \cl_{1,2d}.\label{Q2}
    \end{equation}
However, the form of the Dirac--Hestenes equation differs slightly: 
    \begin{equation}
\label{SemispinorDir-Hest}
\begin{aligned}
      &\sum_{\mu = 0,1,2,3,5,7,\ldots,2d-1}e^{\mu} (\partial_{\mu} \Psi+ \Psi a_{\mu} I) E\\
      &+\sum_{\mu = 3,5,\ldots,2d-3,2d-1}  (\partial_{\mu+1} \Psi + \Psi a_{\mu+1} I ) e^{\mu} E I +m\Psi I=0,
\end{aligned}
\end{equation}
where $\Psi(x): \mathbb{R}^{1,n}\to Q'^{(0)}$, $I=-e^{12}$, and $E=e^0$. The key difference lies in the second summation, which includes an additional term corresponding to the index $\mu=2d-1$.

Finally, we examine the case of a double spinor. In this scenario, the algebra $Q'$ is similar to the previous cases but also contains the generator $e^{2d}$:
 \begin{equation}
        Q' = \cl(e^0,e^1,e^2,e^3,e^5,e^7,\ldots,e^{2d-1},e^{2d})\subset \cl_{1,2d}.\label{Q3}
    \end{equation}
Therefore, the Dirac--Hestenes equation for the double spinor case incorporates the generator $e^{2d}$ into the first summation. The equation takes the form: 
\begin{equation}
\label{DoublespinorDir-Hest}
\begin{aligned}
     &\sum_{\mu = 0,1,2,3,5,7,\ldots,2d-3,2d-1,2d}e^{\mu} (\partial_{\mu} \Psi+ \Psi a_{\mu} I) E\\ &+ \sum_{\mu = 3,5,\ldots,2d-3}  (\partial_{\mu+1} \Psi + \Psi a_{\mu+1} I ) e^{\mu} E I +m\Psi I=0,
\end{aligned}
\end{equation}
where $\Psi(x): \mathbb{R}^{1,n}\to Q'^{(0)}$, $I=-e^{12}$, and $E=e^0$.

We introduce the following notations to simplify the form of the multidimensional Dirac--Hestenes equation:
$$A_{\mu} \coloneqq \partial_{\mu} \Psi+ \Psi a_{\mu} I, \quad \widehat{A}_{\mu} \coloneqq  \widehat{\partial}_{\mu} \widehat{\Psi}+ \widehat{\Psi} \widehat{a}_{\mu} I, $$
where transformed variables are denoted by variables with the hat.

Let us present the transformed equations. The transformed Dirac--Hestenes equation for the case when $n=2d-1$ has the form:
\begin{equation}
    \label{TransfDir-Hest}
    \sum_{\mu = 0,1,2,3,5,7,\ldots,2d-1}\!\!\!\!\!\! \widehat{e^{\mu}} \widehat{A}_{\mu} E+\sum_{\mu = 3,5,\ldots,2d-3}  \widehat{A}_{\mu+1}I \widehat{e^{\mu}} E  +m\widehat{\Psi} I=0.
\end{equation}
For the case of a semi-spinor, the equation is given by:
\begin{equation}
\label{TransfSemispinorDir-Hest}
    \sum_{\mu = 0,1,2,3,5,7,\ldots,2d-1}\!\!\!\!\!\! \widehat{e^{\mu}} \widehat{A}_{\mu} E+\sum_{\mu = 3,5,\ldots,2d-3,2d-1}  \widehat{A}_{\mu+1}I \widehat{e^{\mu}} E  +m\widehat{\Psi} I=0.
\end{equation}
Finally, for the case of a double spinor, the equation takes the form:
\begin{equation}
\label{TransfDoublespinorDir-Hest}
    \sum_{\mu = 0,1,2,3,5,7,\ldots,2d-3,2d-1,2d}\!\!\!\!\!\! \widehat{e^{\mu}} \widehat{A}_{\mu} E+\sum_{\mu = 3,5,\ldots,2d-3}  \widehat{A}_{\mu+1}I \widehat{e^{\mu}} E  +m\widehat{\Psi} I=0.
\end{equation}

To establish the invariance of the multidimensional Dirac–Hestenes equation under Lorentz transformations, we utilize its property of multiplication by a specific Hermitian idempotent. Initially, let us define the Hermitian conjugation operation in $\mathbb{C}\otimes\cl_{1,n}$, which depends on the signature. For the signature $(1,n)$, it has the form \cite{ndimension,Shirokov}:  
\[U^{\dagger}=e^0 \widetilde{\bar{U}} e^0=e^0\sum_{M} (-1)^{\frac{|M|(|M|-1)}{2}}\bar{u}_{M} e^Me^0, \quad U\in \mathbb{C} \otimes\cl_{1,n},\]
where $\bar{U}$ denotes complex conjugation and $\widetilde{U}$
represents reversion \eqref{reversion}.

We consider the Hermitian idempotent $t$:
\[t^2=t,\quad t^{\dagger}=t,\]
and the corresponding left ideal $L(t)$ generated by $t$:
\[L(t) = \{U\in \mathbb{C} \otimes \cl_{1,n}| U t = U\}.\]

If $L(t)$ is a minimal left ideal, meaning it contains no other left ideals except itself and $L(0)$, the corresponding idempotent $t$ is called a primitive idempotent.

All types of spinors are associated with specific Hermitian idempotents~\cite{RumShir}. 
For the case of a spinor, the Hermitian primitive idempotent is constructed as: 
\begin{equation}
\label{event}
    t = \frac{1}{2}(e+ e^0)\prod\limits_{\mu = 1}^{d-1}\frac{1}{2}(e+i e^{2\mu-1}e^{2\mu}) \in  \mathbb{C} \otimes \cl_{1,2d-1}.
\end{equation}
This idempotent projects the spinor space onto an irreducible subspace within the geometric algebra $\mathbb{C} \otimes \cl_{1,2d-1}$. For semi-spinors, which arise in $\mathbb{C} \otimes \cl_{1,2d}$, the Hermitian primitive idempotent takes a similar but extended form:
 \begin{equation}
    \label{oddtsemi}
    t = \frac{1}{2}(e+ e^0)\prod\limits_{\mu = 1}^{d}\frac{1}{2}(e+i e^{2\mu-1}e^{2\mu}) \in  \mathbb{C} \otimes \cl_{1,2d}.
    \end{equation}

Finally, for double spinors, the Hermitian idempotent is given by:
  \begin{equation}
    \label{oddtdouble}
    t = \frac{1}{2}(e+ e^0)\prod\limits_{\mu = 1}^{d-1}\frac{1}{2}(e+i e^{2\mu-1}e^{2\mu}) \in  \mathbb{C} \otimes \cl_{1,2d}.
    \end{equation}
    This idempotent, while similar in structure to the spinor case, operates within the larger geometric algebra $\mathbb{C} \otimes \cl_{1,2d}$. Note that idempotent \eqref{oddtdouble} is not primitive.

It is notable that idempotents \eqref{event}-\eqref{oddtdouble} have several properties:
\begin{equation}
\label{ItEt}
    It = i t,\quad Et = t,
\end{equation}
\begin{equation}
\label{propt}
    ie^{2\mu-1}t=e^{2\mu}t,\quad \mu = 1,2,\ldots,d',
\end{equation}
where $d'$ is the upper bound of their product in \eqref{event}-\eqref{oddtdouble}. 

\begin{ex}
Let us illustrate property \eqref{propt} with specific examples. In the case of spinors in $\mathbb{C} \otimes\cl_{1,3}$ and double spinors in $\mathbb{C} \otimes\cl_{1,4}$, we have $d'=1$. Thus, the corresponding relation simplifies to:
    $$-e^1e^2 t = it.$$

In contrast, for semi-spinors in $\mathbb{C} \otimes\cl_{1,4}$ and spinors in $\mathbb{C} \otimes\cl_{1,5}$, we have $d'=2$. Therefore, the corresponding relations are:
    $$-e^1e^2 t = -e^3e^4 t =it$$
\end{ex}

Also, we recall Lemma \ref{Yt=0} \cite{RumShir}. This lemma is essential for proving the invariance of the Dirac--Hestenes equation under Lorentz transformations.
\begin{lem}
\label{Yt=0}
    Let $Q'$ be \eqref{Q1} (\eqref{Q2} or \eqref{Q3}) and the Hermitian idempotent $t$ have form \eqref{event} (\eqref{oddtsemi} or \eqref{oddtdouble} respectively). If $Y\in~Q'^{(0)}$ and $Yt=0$, then $Y=0$.
\end{lem}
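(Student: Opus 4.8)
The plan is to recast the statement as the injectivity of the $\mathbb{R}$-linear map $\phi\colon Q'^{(0)}\to L(t)$, $Y\mapsto Yt$, and to obtain this injectivity from a dimension count combined with surjectivity. Since $Q'^{(0)}$ and $L(t)$ are finite-dimensional real vector spaces, it suffices to show that they have equal real dimension and that $\phi$ is onto; then $\phi$ is bijective, and $Yt=0$ forces $Y=0$. So the proof splits into three tasks: compute $\dim_{\mathbb{R}}Q'^{(0)}$, compute $\dim_{\mathbb{R}}L(t)$, and prove $Q'^{(0)}t=L(t)$.

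First I would compute $\dim_{\mathbb{R}}Q'^{(0)}$ directly from the generators. In the spinor and semi-spinor cases the algebra $Q'$ of \eqref{Q1}, \eqref{Q2} is generated by $e^0,e^2$ and the $d$ odd-index generators $e^1,e^3,\ldots,e^{2d-1}$, i.e.\ by $d+2$ generators, so $\dim_{\mathbb{R}}Q'^{(0)}=2^{d+1}$; in the double-spinor case \eqref{Q3} the extra generator $e^{2d}$ gives $d+3$ generators and $\dim_{\mathbb{R}}Q'^{(0)}=2^{d+2}$. Next I would compute $\dim_{\mathbb{C}}L(t)$ using the structure of $t$: each factor of \eqref{event}, \eqref{oddtsemi}, \eqref{oddtdouble} has the form $\tfrac12(e+a)$ with $a\in\{e^0,\,ie^{2\mu-1}e^{2\mu}\}$, and these $a$ mutually commute and satisfy $a^2=e$, so the $1+d'$ factors are independent commuting idempotents, each halving the dimension of the left ideal. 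Hence $\dim_{\mathbb{C}}L(t)=2^{\,n+1}/2^{\,1+d'}$, which gives $2^d$ in the spinor ($d'=d-1$, $n=2d-1$) and semi-spinor ($d'=d$, $n=2d$) cases, and $2^{d+1}$ in the double-spinor ($d'=d-1$, $n=2d$) case. Passing to real dimensions, $\dim_{\mathbb{R}}L(t)$ equals $2^{d+1}$, $2^{d+1}$, $2^{d+2}$ respectively, matching $\dim_{\mathbb{R}}Q'^{(0)}$ in every case.

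The core of the argument — the step I expect to be the main obstacle — is surjectivity, $Q'^{(0)}t=L(t)$. Here I would show that every spanning element $e^M t$ of $L(t)$ already lies in $\phi(Q'^{(0)})$. The reduction uses exactly the relations recalled in \eqref{ItEt} and \eqref{propt}: any generator $e^{2\mu}$ that does not belong to $Q'$ can be moved (up to a sign) adjacent to $t$ and traded, via $e^{2\mu}t=ie^{2\mu-1}t$, for the generator $e^{2\mu-1}\in Q'$ at the cost of a factor $i$; occurrences of $e^0$ are removed or inserted using $Et=t$ to control the parity of the remaining multi-index; and the accumulated scalar, which lies in $\{\pm1,\pm i\}$, is turned real by $it=It=-e^{12}t$, since $e^{12}\in Q'^{(0)}$. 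Carrying this out, each $e^Mt$ becomes a real multiple of $e^{N}t$ with $e^{N}$ an even monomial in the generators of $Q'$, i.e.\ $e^M t\in Q'^{(0)}t$; the same manipulation applied to $i\,e^M t=-e^{M}e^{12}t$ shows $i\,e^Mt\in Q'^{(0)}t$ as well, so the real image $\phi(Q'^{(0)})$ contains a complex spanning set of $L(t)$ and therefore equals $L(t)$.

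Finally, combining equal real dimension with the surjectivity of $\phi$, the map $\phi$ is an $\mathbb{R}$-linear bijection, so its kernel is trivial; that is exactly the assertion that $Y\in Q'^{(0)}$ with $Yt=0$ must vanish. I expect the only delicate points to be the uniform bookkeeping of signs, parities and factors of $i$ in the surjectivity reduction, and the justification that the commuting idempotent factors of $t$ are genuinely independent so that the dimension formula for $L(t)$ holds — both of which are routine but must be handled consistently across the spinor, semi-spinor and (non-primitive) double-spinor cases.
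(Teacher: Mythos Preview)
The paper does not actually prove this lemma; it is quoted from \cite{RumShir} without argument, so there is no in-paper proof to compare against. Your proposal supplies a complete and correct proof. The dimension count is right in all three cases (the $d+2$ or $d+3$ generators of $Q'$ give $\dim_{\mathbb{R}}Q'^{(0)}=2^{d+1}$ or $2^{d+2}$, and the $1+d'$ commuting, independent idempotent factors of $t$ cut $\dim_{\mathbb{C}}(\mathbb{C}\otimes\cl_{1,n})$ in half each, matching those numbers after doubling). Your surjectivity step is the substantive part, and it is exactly the mechanism the paper itself exploits later in Proposition~\ref{SPsi}: trade each $e^{2\mu}\notin Q'$ for $ie^{2\mu-1}$ using \eqref{propt}, absorb leftover factors of $i$ via $It=it$, and fix parity with $Et=t$. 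With surjectivity and equal real dimension, injectivity of $Y\mapsto Yt$ follows, which is the lemma.
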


We present an approach to simplify the product of the left-hand side of the multidimensional Dirac--Hestenes equation and the Hermitian idempotent $t$, which is associated with the specific type of spinor under consideration. This simplification transforms two sums of the equation into a single sum. It is important to note that such a reduction is not achievable without multiplication by the idempotent.

\begin{prop}
\label{sum2}
    Let $F$ denote the left-hand side of Dirac--Hestenes equation \eqref{Dir-Hest} (\eqref{SemispinorDir-Hest} or \eqref{DoublespinorDir-Hest}) and $t$ have form \eqref{event} (\eqref{oddtsemi} or \eqref{oddtdouble} respectively). Then:
    \begin{equation}
    \label{Ft}
       F t = \left(\sum_{\mu=0}^n e^{\mu} A_{\mu} E +m \Psi I\right)t.
    \end{equation}
\end{prop}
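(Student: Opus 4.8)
The plan is to use the fact that in each of \eqref{Dir-Hest}, \eqref{SemispinorDir-Hest}, \eqref{DoublespinorDir-Hest} the first summation $\sum e^{\mu}A_{\mu}E$ and the mass term $m\Psi I$ already appear verbatim on the right-hand side of \eqref{Ft}. Hence it suffices to show that the \emph{second} summation, after right-multiplication by $t$, reproduces exactly the terms $e^{\mu}A_{\mu}E$ whose (even) indices $\mu$ are missing from the first summation. A short index count settles the accounting: in the spinor case the first sum omits precisely the even indices $4,6,\ldots,2d-2$, which are exactly the values $\mu+1$ as $\mu$ runs over $\{3,5,\ldots,2d-3\}$ in the second sum; in the semi-spinor case the second sum additionally supplies $\mu+1=2d$, whereas in the double-spinor case the index $2d$ already belongs to the first sum. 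In every case the indices combine to give the full range $\{0,1,\ldots,n\}$.

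First I would isolate a single term $A_{2k}\,e^{2k-1}EI\,t$ of the second summation (so $\mu=2k-1$). Since $E=e^0$ and $I=-e^{12}$ satisfy $Et=t$ and $It=it$ by \eqref{ItEt}, we get $EIt=it$, so that $e^{2k-1}EIt = e^{2k-1}(it) = i\,e^{2k-1}t = e^{2k}t$, the last step being property \eqref{propt}. Thus the term collapses to $A_{2k}\,e^{2k}\,t$. It is essential here that $t$ multiplies on the right: the relation $i\,e^{2k-1}t=e^{2k}t$ holds only inside the ideal generated by $t$, so no such reduction is available for the bare expression $F$ — which is exactly the point stressed before the statement.

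Next I would commute $e^{2k}$ past $A_{2k}$. For each index $2k$ arising in the second summation — the range of $k$ being the one governed by the bound $d'$ that controls \eqref{propt} — the generator $e^{2k}$ has an even index $\geq 4$ which, in each of \eqref{Q1}, \eqref{Q2}, \eqref{Q3}, differs from the index of every generator of $Q'$; hence $e^{2k}$ anticommutes with each generator of $Q'$ and so commutes with the even element $A_{2k}\in Q'^{(0)}$ (even because $\Psi\in Q'^{(0)}$, $I\in Q'^{(0)}$, and $a_{2k}$ is a scalar). Consequently $A_{2k}\,e^{2k}\,t = e^{2k}A_{2k}\,t = e^{2k}A_{2k}E\,t$, using $Et=t$, which is exactly the $\mu=2k$ term $e^{\mu}A_{\mu}E$ of the full sum in \eqref{Ft}. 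Summing over all indices of the second summation and restoring the common first summation and mass term then yields \eqref{Ft} in all three cases.

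The main obstacle is conceptual rather than computational: recognizing that right-multiplication by the Hermitian idempotent $t$ is precisely the device that converts the factor $e^{2k-1}EI$ into the generator $e^{2k}$ via \eqref{propt}, thereby fusing the two summations into one. The only remaining care lies in the bookkeeping, namely verifying that the commutation argument covers the correct even indices in each of the three cases and that this range is compatible with the upper bound $d'$ of the product defining $t$.
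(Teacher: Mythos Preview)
Your proposal is correct and follows essentially the same approach as the paper's proof: shift the index in the second sum, use $EIt=it$ together with \eqref{propt} to convert $e^{2k-1}EI\,t$ into $e^{2k}t$, and then commute $e^{2k}$ past $A_{2k}\in Q'^{(0)}$ to complete the full summation. The only differences are cosmetic --- you parametrize by $2k$ rather than the shifted $\mu$, and you spell out the index bookkeeping for all three cases where the paper treats $n=2d-1$ and declares the rest analogous.
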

\begin{proof}
We provide the proof for the case $n=2d-1$. The proofs for the other cases follow an analogous procedure. 
Shifting the index in the second sum of $F$ by $1$, the expression becomes:
        $$F =  \sum_{\mu = 0,1,2,3,5,7,\ldots,2d-1}e^{\mu} A_{\mu} E +\sum_{\mu = 4,6,\ldots,2d-2} A_{\mu} e^{\mu-1} E I +m\Psi I.$$

    Multiplying $F$ by the idempotent $t$ on the right and using properties \eqref{ItEt} and \eqref{propt}, the second sum can be rewritten as:
$$ \sum_{\mu = 4,6,\ldots,2d-2} A_{\mu} e^{\mu-1} E I t = \sum_{\mu = 4,6,\ldots,2d-2}  A_{\mu} e^{\mu} E t.$$

    Since $A_{\mu}\in Q'^{(0)}$, the generators $e^{\mu}$ (for $\mu=4,6,\ldots,2d-2$) commute with $A_{\mu}$. Combining this result with the first sum, we obtain:
    $$Ft = \left(\sum_{\mu=0}^{2d-1} e^{\mu} A_{\mu}E+m\Psi I\right)t.$$
    
This completes the proof.
\end{proof}

In the next section, we demonstrate that the multidimensional Dirac--Hestenes equations \eqref{Dir-Hest}, \eqref{SemispinorDir-Hest}, and \eqref{DoublespinorDir-Hest} are invariant under Lorentz transformations. This is achieved by applying Proposition \ref{sum2} and Lemma \ref{Yt=0} to combine the two sums into one and subsequently eliminate the idempotent $t$, thereby simplifying the analysis of Lorentz invariance.

\section{Lorentz invariance}
\label{secLI}

In this section, we investigate the Lorentz invariance of the multidimensional Dirac–Hestenes equation using two distinct approaches: the tensor approach and the spinor approach. In the tensor point of view, the generators of the geometric algebra multiply by an element of the pseudo-orthogonal group. However, this approach requires several invariant quantities to maintain consistency within the equation. In contrast, the spinor approach focuses on transforming the wave function directly through multiplication by an element of the spin group, eliminating the need for additional invariants. Due to its elegance and simplicity, the spinor approach is often preferred in the literature, as it provides a more natural representation of Lorentz symmetry~\cite{lounesto}.

We begin by analyzing the tensor approach. In this approach, we impose the condition:
\begin{equation}
    \label{rist1appr}
    -e^{2\mu-1} e^{2\mu} \widehat{t}=i\widehat{t},\quad \mu=1,2,\ldots, d',
\end{equation}
where $d'$ is the upper bound of the product in the explicit form of idempotents \eqref{event}-\eqref{oddtdouble} and $\widehat{t}$ is the idempotent after the transformation. 

It is important to note that analogous conditions hold for the original idempotent $t$, as we have shown in equation \eqref{propt}. Moreover, the analogue of equation \eqref{propt} with hatted variables also holds for the idempotent $\widehat{t}$, which follows directly from its explicit form. In this context, the elements $\{e^{2\mu-1} e^{2\mu}\}$ can be regarded as invariants under the transformation. From this invariance, condition \eqref{rist1appr} immediately follows. In the multidimensional case, these elements play a role analogous to the imaginary unit $i$, similarly to how $e^{12}$ appears in the four-dimensional Dirac--Hestenes theory.

\begin{thm}
 Let $Q'$ be \eqref{Q1} (\eqref{Q2} or \eqref{Q3}), the Hermitian idempotent $t$ have form \eqref{event} (\eqref{oddtsemi} or \eqref{oddtdouble} respectively), $\mu,\,\nu=0,1,2,\ldots,n$ and the transformations be defined as follows:
 \begin{align*}
    &x^{\mu} \mapsto \widehat{x^{\mu}} = p_{\nu}^{\mu} x^{\nu}, \quad P=(p_{\nu}^{\mu})\in \begin{cases}
        {\rm O}(1,n), &n=2d-1\\
        {\rm SO}(1,n), &n=2d
    \end{cases} \\
   & \partial_{\mu} \mapsto \widehat\partial_{\mu} = q_{\mu}^{\nu} \partial_{\nu}, \quad Q=(q_{\mu}^{\nu})=P^{-1}, \\
   & a_{\mu} \mapsto \widehat{a}_{\mu} = q_{\mu}^{\nu} a_{\nu}, \\
    &e^{\mu} \mapsto \widehat{e^{\mu}}=p_{\nu}^{\mu}e^{\nu}, \\
    &\Psi \mapsto \widehat\Psi = \Psi.
\end{align*}

   If $\Psi\in Q'^{(0)}$ is a solution to Dirac--Hestenes equation \eqref{Dir-Hest} (\eqref{SemispinorDir-Hest} or \eqref{DoublespinorDir-Hest} respectively), $I$ and $E$ are invariant under transformations, and condition \eqref{rist1appr} be hold, then $\widehat\Psi$ is also a solution to transformed Dirac--Hestenes equation \eqref{TransfDir-Hest} (\eqref{TransfSemispinorDir-Hest} or \eqref{TransfDoublespinorDir-Hest} respectively).
\end{thm}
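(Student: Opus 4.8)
The plan is to imitate the direct substitution used in Theorem~\ref{n3tensor}, but to first remove the obstruction created by the two restricted summations. Write $\widehat F$ for the left-hand side of the transformed equation~\eqref{TransfDir-Hest} (respectively \eqref{TransfSemispinorDir-Hest} or \eqref{TransfDoublespinorDir-Hest}); the goal is to show $\widehat F=0$. If one substitutes $\widehat{e^\mu}=p^\mu_\nu e^\nu$, $\widehat A_\mu=q^\alpha_\mu A_\alpha$, $\widehat\Psi=\Psi$ directly, the contraction $p^\mu_\nu q^\alpha_\mu$ does \emph{not} collapse to $\delta^\alpha_\nu$, because each of the two sums runs only over a restricted index set rather than over all of $0,\dots,n$. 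The remedy, exactly as prepared by Proposition~\ref{sum2}, is to multiply $\widehat F$ on the right by the transformed idempotent $\widehat t$, merge the two restricted sums into a single sum over all indices, and only then contract.

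First I would verify that $\widehat t$ satisfies the same structural identities as $t$. From the invariance of $E$ one has $\widehat{e^0}=e^0$, so the leading factor of $\widehat t$ is $\frac12(e+e^0)$ and hence $E\widehat t=\widehat t$. Condition~\eqref{rist1appr} gives $-e^{2\mu-1}e^{2\mu}\widehat t=i\widehat t$ for all $\mu=1,\dots,d'$; the case $\mu=1$ together with $I=-e^{12}$ yields $I\widehat t=i\widehat t$, while multiplying $-e^{2\mu-1}e^{2\mu}\widehat t=i\widehat t$ on the left by $e^{2\mu-1}$ and using $e^{2\mu-1}e^{2\mu-1}=-e$ produces $e^{2\mu}\widehat t=ie^{2\mu-1}\widehat t$. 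These are precisely the analogues of \eqref{ItEt} and \eqref{propt} that drive the proof of Proposition~\ref{sum2}. Consequently that computation applies verbatim with $t$ replaced by $\widehat t$, giving
\[
\widehat F\,\widehat t=\Big(\sum_{\mu=0}^n \widehat{e^\mu}\,\widehat A_\mu\,E+m\widehat\Psi\,I\Big)\widehat t .
\]

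Now the decisive step: since the sum runs over all $\mu$, substituting $\widehat{e^\mu}=p^\mu_\nu e^\nu$, $\widehat A_\mu=q^\alpha_\mu A_\alpha$, $\widehat\Psi=\Psi$ and using $p^\mu_\nu q^\alpha_\mu=\delta^\alpha_\nu$ (valid because $Q=P^{-1}$) collapses the bracket to $\sum_{\nu=0}^n e^\nu A_\nu E+m\Psi I$, which is exactly the bracketed factor on the right-hand side of~\eqref{Ft}. Reading the identity of Proposition~\ref{sum2} backwards — it is an algebraic consequence of the idempotent identities just checked for $\widehat t$ — gives $\big(\sum_{\nu=0}^n e^\nu A_\nu E+m\Psi I\big)\widehat t=F\,\widehat t$, where $F$ is the left-hand side of the \emph{original} equation~\eqref{Dir-Hest}. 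Since $\Psi$ solves that equation, $F=0$, and therefore $\widehat F\,\widehat t=0$.

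It remains to strip the idempotent and conclude $\widehat F=0$ by means of Lemma~\ref{Yt=0} transported to the transformed setting. This last step is where I expect the main difficulty: after substitution $\widehat F$ mixes the transformed generators $\widehat{e^\mu}\in\cl^1_{1,n}$ with coefficients $\widehat A_\mu=q^\alpha_\mu A_\alpha\in Q'^{(0)}$ built from the original generators, so one must check carefully that $\widehat F$ lands in the even part of the subalgebra associated with $\widehat t$ before Lemma~\ref{Yt=0} can be invoked; equally, one must confirm that condition~\eqref{rist1appr} alone, rather than full invariance of the bivectors $e^{2\mu-1}e^{2\mu}$, already transports every idempotent identity used above. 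Once these bookkeeping points are settled, the identical argument for the semi-spinor and double-spinor equations \eqref{SemispinorDir-Hest} and \eqref{DoublespinorDir-Hest} goes through, since Proposition~\ref{sum2} and Lemma~\ref{Yt=0} hold uniformly across the three cases.
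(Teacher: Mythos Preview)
Your approach is correct and mirrors the paper's exactly: multiply $\widehat F$ by $\widehat t$, invoke Proposition~\ref{sum2} in the hatted setting (via condition~\eqref{rist1appr}) to merge the restricted sums into one full sum over $\mu=0,\dots,n$, substitute and contract via $p^\mu_\nu q^\alpha_\mu=\delta^\alpha_\nu$, recognise the result as $F\widehat t=0$, and then strip $\widehat t$ by Lemma~\ref{Yt=0} applied in $\widehat Q'^{(0)}$. The bookkeeping concerns you flag at the end are handled in the paper only by assertion---the Remark preceding the proof simply declares $\widehat F\in\widehat Q'^{(0)}$, and the discussion around~\eqref{rist1appr} treats the bivectors $e^{2\mu-1}e^{2\mu}$ as invariant---so your caution is appropriate, but the paper supplies no more detail than you already have.
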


\begin{rem}
    The transformed Dirac--Hestenes equation and $\widehat\Psi$ do not belong to the algebra $Q'^{(0)}$. However, they are elements of the algebra $\widehat Q'^{(0)}$, where, for instance, for the case $n=2d-1$:
    $$\widehat Q' = \cl(\widehat{e^0},\widehat{e^1},\widehat{e^2},\widehat{e^3},\widehat{e^5},\widehat{e^7},\ldots,\widehat{e^{2d-1}})\subset \cl_{1,2d-1}.$$
\end{rem}

\begin{proof}
Let us start by considering the case $n=2d-1$. We verify the invariance of the Dirac--Hestenes equation by substituting the transformed variables into equation \eqref{Dir-Hest} and checking that the left-hand side equals zero. 
Let us denote the left-hand side of  transformed Dirac--Hestenes equation \eqref{TransfDir-Hest} by $\widehat{F}$.
The variable $\widehat{A}_{\mu}$ satisfies the transformation:
$$\widehat{A}_{\mu} = q_{\mu}^{\alpha} A_{\alpha}.$$

Multiplying $\widehat{F}$ by the idempotent $\widehat{t}$, we apply Proposition \ref{sum2} to merge the summations:  $$\widehat{F} \widehat{t} = \left(\widehat{e^{\mu}} \widehat{A}_{\mu} E +m\widehat{\Psi} I \right)\widehat{t}.$$

Thus, substituting the transformations, we obtain:
\begin{equation}
\label{tmpTheor1}
   \widehat{F} \widehat{t} =\left( p_{\nu}^{\mu} q_{\mu}^{\alpha} e^{\nu} A_{\alpha} E  +m\Psi I \right)\widehat{t},\quad \alpha=0,1,2,\ldots,2d-1. 
\end{equation}

From the fact that $Q$ is the inverse of the matrix $P$, it follows that $p_{\nu}^{\mu} q_{\mu}^{\alpha} = \delta^{\alpha}_{\nu}$. 
Therefore, taking into account condition \eqref{rist1appr} and Proposition~\ref{sum2}, the right-hand side of \eqref{tmpTheor1} is the left-hand side of equation \eqref{Dir-Hest}, which equals zero since $\Psi$ is a solution to it:
$$\widehat{F}\widehat{t}=0.$$

Applying Lemma \ref{Yt=0} for $\widehat{F}\in \widehat{Q}'^{(0)}$, we obtain that it equals zero. 

   The proofs for the other cases follow the same steps as in the case $n=2d-1$. Specifically, we substitute the transformed variables, simplify the expression by merging two summations using the corresponding idempotent $\widehat{t}$, and apply Lemma \ref{Yt=0} to conclude the result.
\end{proof}

 In the spinor approach to the Dirac equation, the generators remain unchanged, while the solution $\psi$ transforms as $S\psi$ if $S\in {\rm Spin}(1,n)$. 
However, this transformation cannot be directly applied to solutions of the Dirac--Hestenes equation. If $\Psi \in Q'^{(0)}$ and $S\in {\rm Pin}(1,n)$, then in general, $S \Psi \notin Q'^{(0)}$, since $S$ may contain the generators $e^4,e^6,\ldots,e^{2d-2}$, meaning that $S\notin Q'$.
On the other hand, a relationship exists between solutions of the multidimensional Dirac equation and those of the multidimensional Dirac--Hestenes equation. More precisely, there is a correspondence between elements of the left ideal $L(t)$ and the algebra $ Q'^{(0)}$, given by \cite{RumShir}:
    $$\forall \psi\in L(t)\, \exists! \Psi\in Q'^{(0)}: \psi=\Psi t. $$
Moreover, from the definition of the left ideal, it follows that for any $S\in\cl_{1,n}$ and $\psi\in L(t)$, the product $S \psi$ belongs to $L(t)$. Thus, an analogous result to the spinor approach for the multidimensional Dirac equation should also hold for the multidimensional Dirac--Hestenes equation. We present Proposition~\ref{SPsi} on the transformation of $\Psi$.

Actually, any element $S\in\cl^{(0)}_{1,n}$ can be decomposed by separating elements belonging to $Q'$ as follows:
\begin{equation}
    \label{decS}
    S = S^0 +\sum_{k} S_{\mu_1\,\mu_2\,\ldots\,\mu_k} e^{\mu_1\,\mu_2\,\ldots\,\mu_k}, 
\end{equation}
    where $\mu_i=4,6,8,\ldots$ and $S^{0}\in Q'^{(0)}$. Additionally, if $k=2s$, then $S_{\mu_1\,\mu_2\,\ldots\,\mu_k}\in Q'^{(0)}$, otherwise $S_{\mu_1\,\mu_2\,\ldots\,\mu_k}\in Q'^{(1)}$. 

    \begin{ex}
    Let us consider the element $S=e+e^{12}+e^{45}+e^{1234}+e^{012345} \in \cl^{(0)}_{1,5}$. Then, the relevant subalgebra has the form $Q' =\cl(e^0,e^1,e^2,e^3,e^5)$. We can decompose $S$ as follows:
    $$S = S^0 + S_4 e^4,$$
    where
    $$S^0 = e+e^{12} \in Q'^{(0)},\quad S_{4}=-e^5+e^{123} -e^{01235}\in Q'^{(1)}.$$
\end{ex}
    
    \begin{prop}
\label{SPsi}
    Let the Hermitian idempotent $t$ have form \eqref{event} (\eqref{oddtsemi} or \eqref{oddtdouble}) and $\psi \in L(t)$. Suppose $Q'$ is defined by \eqref{Q1} (\eqref{Q2} or \eqref{Q3} respectively), then there uniquely exists $\Psi\in Q'^{(0)}$ such that $\psi = \Psi t$. 
    
    If $S\in \cl^{(0)}_{1,n}$ and $\widehat\psi=S \psi$, then $\widehat\Psi$ corresponded $\widehat\psi$ has the form:
    $$\widehat\Psi = S^0 \Psi +\sum_k S_{\mu_1\,\mu_2\,\ldots\,\mu_k} \Psi e^{\mu_1-1\,\mu_2-1\,\ldots\,\mu_k-1} I^k \in Q'^{(0)}.$$

    If $S\in \cl^{(1)}_{1,n}$ and $\widehat\psi=S \psi$, then $\widehat\Psi$ corresponded $\widehat\psi$ has the form:
    $$\widehat\Psi = \left(S^0 \Psi +\sum_k S_{\mu_1\,\mu_2\,\ldots\,\mu_k} \Psi e^{\mu_1-1\,\mu_2-1\,\ldots\,\mu_k-1} I^k\right)E \in Q'^{(0)}.$$
    
\end{prop}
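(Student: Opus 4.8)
The plan is to reduce everything to the correspondence $\forall \psi\in L(t)\,\exists!\,\Psi\in Q'^{(0)}\colon \psi=\Psi t$ cited from \cite{RumShir}: since $\psi t=\psi$ we have $(S\psi)t=S(\psi t)=S\psi$, so $\widehat\psi=S\psi\in L(t)$ and is represented by a unique $\widehat\Psi\in Q'^{(0)}$. It therefore suffices to rewrite $S\psi=S\Psi t$ in the form $(\textit{something in }Q'^{(0)})\,t$ and read off $\widehat\Psi$ by uniqueness. First I would treat the even case $S\in\cl^{(0)}_{1,n}$, inserting decomposition \eqref{decS}, so that $S\Psi t=S^0\Psi t+\sum_k S_{\mu_1\ldots\mu_k}e^{\mu_1\ldots\mu_k}\Psi t$, where all the ``bad'' indices satisfy $\mu_i\in\{4,6,8,\ldots\}$.

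Two elementary facts drive the computation. First, each generator $e^{\mu_i}$ with $\mu_i\in\{4,6,\ldots\}$ anticommutes with every generator of $Q'$ and hence commutes with the even element $\Psi$; consequently the whole monomial commutes with $\Psi$, giving $e^{\mu_1\ldots\mu_k}\Psi=\Psi e^{\mu_1\ldots\mu_k}$. Second, I would establish the push-through identity
\begin{equation*}
 e^{\mu_1\ldots\mu_k}\,t=e^{(\mu_1-1)(\mu_2-1)\ldots(\mu_k-1)}\,I^{k}\,t
\end{equation*}
by induction on $k$, using only property \eqref{propt} in the form $e^{2\mu}t=i\,e^{2\mu-1}t$ together with $It=it$ and $I^2=-1$ from \eqref{ItEt}. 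The base case $k=1$ is exactly \eqref{propt} rewritten via $It=it$. In the inductive step one peels off the rightmost factor with \eqref{propt}, moves the freshly created odd generator $e^{\mu_k-1}$ leftward past the remaining even generators, applies the induction hypothesis, and finally reorders the odd generators; the point is that $I=-e^{12}$ commutes with every $e^{j}$, $j\geq3$, so it slides freely to $t$ and recombines through $It=it$.

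Combining the two facts term by term yields $S\Psi t=\widehat\Psi t$ with $\widehat\Psi=S^0\Psi+\sum_k S_{\mu_1\ldots\mu_k}\Psi e^{(\mu_1-1)\ldots(\mu_k-1)}I^{k}$; a parity count (using $S^0\in Q'^{(0)}$, $S_{\mu_1\ldots\mu_k}\in Q'^{(0)}$ or $Q'^{(1)}$ according to the parity of $k$, the odd-index monomial of grade $k$, and $I^k\in Q'^{(0)}$) shows each summand lies in $Q'^{(0)}$, so $\widehat\Psi\in Q'^{(0)}$ and uniqueness finishes the even case. For the odd case $S\in\cl^{(1)}_{1,n}$ the same decomposition and the same two facts produce $S\Psi t=Wt$ with $W\in Q'^{(1)}$; here I would use $Et=t$ to write $Wt=(WE)t$, and since $W\in Q'$ is odd while $E=e^0\in Q'$ is a single generator, $WE\in Q'^{(0)}$. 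Thus $\widehat\Psi=WE=\big(S^0\Psi+\sum_k S_{\mu_1\ldots\mu_k}\Psi e^{(\mu_1-1)\ldots(\mu_k-1)}I^{k}\big)E$, as claimed.

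The main obstacle is the sign bookkeeping in the induction for the push-through identity: transporting the new odd generator $e^{\mu_k-1}$ to its sorted position requires moving it once leftward past the $k-1$ surviving even generators and once rightward past the $k-1$ odd generators, producing two factors of $(-1)^{k-1}$ that must be checked to cancel, while the accumulated scalar $i$ must be reconciled with the extra power of $I$ through $It=it$ and $I^2=-1$. Once this identity is secured, the remaining steps—commuting the bad generators past $\Psi$, the parity counts, and the passage from the odd to the even case via $Et=t$—are routine.
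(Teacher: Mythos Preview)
Your proposal is correct and follows essentially the same approach as the paper: decompose $S$ via \eqref{decS}, commute the ``bad'' even-index generators past $\Psi\in Q'^{(0)}$, apply the push-through identity $e^{\mu_1\ldots\mu_k}t=e^{(\mu_1-1)\ldots(\mu_k-1)}I^{k}t$, and in the odd case insert $E$ using $Et=t$. The paper simply asserts the push-through identity as a direct consequence of $e^{2\mu}t=e^{2\mu-1}It$ without spelling out the inductive sign bookkeeping you flag, and it does not write out the parity check for $\widehat\Psi\in Q'^{(0)}$; your plan is more careful on both points but otherwise identical.
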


\begin{proof}
We begin by considering the case when $S\in \cl_{1,n}^{(0)}$. The second case follows similarly, taking into account property \eqref{ItEt}. Using decomposition~\eqref{decS}, we express $S\Psi t$ as:
      $$S\Psi t = \left( S^0 +\sum_k S_{\mu_1\,\mu_2\,\ldots\,\mu_k} e^{\mu_1\,\mu_2\,\ldots\,\mu_k}\right) \Psi t.$$

 Next, we examine the summation terms. Since $\Psi\in Q'^{(0)}$ is an even element that does not contain the basis elements $e^{\mu_1\,\mu_2\,\ldots\,\mu_k}$, it commutes with them, allowing us to rewrite the expression as:
 $$S\Psi t = \left( S^0 \Psi +\sum_k S_{\mu_1\,\mu_2\,\ldots\,\mu_k} \Psi e^{\mu_1\,\mu_2\,\ldots\,\mu_k}\right)  t.$$

To simplify the terms further, we use property \eqref{propt} in the following form:
    $$e^{2\mu} t = e^{2\mu-1} I t.$$
Applying this relation, the product of a basis element with the idempotent transforms as:
$$e^{\mu_1\,\mu_2\,\ldots\,\mu_k} t = e^{\mu_1-1\,\mu_2-1\,\ldots\,\mu_k-1} I^k t.$$

Substituting this into the previous expression, we obtain:
  $$S\Psi t = \left( S^0 \Psi +\sum_k S_{\mu_1\,\mu_2\,\ldots\,\mu_k} \Psi e^{\mu_1-1\,\mu_2-1\,\ldots\,\mu_k-1} I^k\right)  t.$$
\end{proof}

\begin{thm}
     Let us fix the primitive Hermitian idempotent $t$ as in formula (\ref{event}) and the real algebra $Q'$ be defined by \eqref{Q1}.  Consider the following transformations:
 \begin{align*}
    &x^{\mu} \mapsto \widehat{x^{\mu}} = p_{\nu}^{\mu} x^{\nu}, \quad P=(p_{\nu}^{\mu})\in 
        {\rm O}(1,2d-1), \\
   & \partial_{\mu} \mapsto \widehat\partial_{\mu} = q_{\mu}^{\nu} \partial_{\nu}, \quad Q=(q_{\mu}^{\nu})=P^{-1}, \\
   & a_{\mu} \mapsto \widehat{a}_{\mu} = q_{\mu}^{\nu} a_{\nu}, \\
    &e^{\mu} \mapsto \widehat{e^{\mu}}=e^{\mu},
\end{align*}
where $\mu,\,\nu=0,1,2,\ldots,2d-1$.
Also, assume that the element $S$ is related to the matrix $P$ as in \eqref{SP}.
     
     If $\Psi\in Q'^{(0)}$ is a solution of equation \eqref{Dir-Hest} and $S\in {\rm Spin}(1,2d-1)$, then the function $\widehat\Psi$
   $$\widehat\Psi = S^0 \Psi +\sum_k S_{\mu_1\,\mu_2\,\ldots\,\mu_k} \Psi e^{\mu_1-1\,\mu_2-1\,\ldots\,\mu_k-1} I^k$$
 is a solution of transformed Dirac--Hestenes equation \eqref{TransfDir-Hest}.

 If $\Psi\in Q'^{(0)}$ is a solution of equation \eqref{Dir-Hest} and $S\in  {\rm Pin}(1,2d-1)\setminus {\rm Spin}(1,2d-1)$, then the function $\widehat\Psi$
   $$\widehat\Psi = S^0 \Psi E +\sum_k S_{\mu_1\,\mu_2\,\ldots\,\mu_k} \Psi e^{\mu_1-1\,\mu_2-1\,\ldots\,\mu_k-1} I^k E$$
 is a solution of  transformed Dirac--Hestenes equation \eqref{TransfDir-Hest}.
\end{thm}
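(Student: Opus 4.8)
The plan is to reduce the Dirac--Hestenes equation to the ordinary Dirac equation inside the minimal left ideal $L(t)$, where the spinor transformation law is transparent, prove invariance there, and then transport the conclusion back to $Q'^{(0)}$ via Proposition \ref{SPsi} and Lemma \ref{Yt=0}. Writing $\psi = \Psi t \in L(t)$ and applying Proposition \ref{sum2} together with the identities \eqref{ItEt}, I would first observe that right multiplication by $t$ turns the operator of \eqref{Dir-Hest} into the Dirac operator: since $Ft = \big(\sum_{\mu=0}^{2d-1} e^\mu A_\mu E + m\Psi I\big)t$ and every factor $E$, $I$ acting on $t$ collapses to a scalar ($Et=t$, $It=it$), one gets
\[ Ft = \sum_{\mu=0}^{2d-1} e^\mu(\partial_\mu\psi + i a_\mu\psi) + i m \psi. \]
Thus $\Psi$ solves \eqref{Dir-Hest} if and only if $\psi$ solves the Dirac equation with the generators playing the role of gamma matrices.

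Next I would prove the spinor invariance at the level of the Dirac equation, keeping the generators fixed. Substituting $\widehat\psi = S\psi$, $\widehat\partial_\mu = q_\mu^\alpha\partial_\alpha$, $\widehat a_\mu = q_\mu^\alpha a_\alpha$ into the transformed Dirac operator, pulling the constant $S$ to the left, and invoking \eqref{SP} in the form $S^{-1}e^\mu S = p^\mu_\nu e^\nu$ together with $q_\mu^\alpha p^\mu_\nu = \delta^\alpha_\nu$ (valid because $Q=P^{-1}$), I obtain
\[ \sum_\mu e^\mu(\widehat\partial_\mu\widehat\psi + i\widehat a_\mu\widehat\psi) + i m\widehat\psi = S\Big(\sum_\nu e^\nu(\partial_\nu\psi + i a_\nu\psi) + i m\psi\Big) = 0. \]
This step uses only \eqref{SP} and $Q=P^{-1}$, hence is insensitive to the parity of $S$ and simultaneously covers $S\in{\rm Spin}(1,2d-1)$ and $S\in{\rm Pin}(1,2d-1)\setminus{\rm Spin}(1,2d-1)$.

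To return to the Dirac--Hestenes picture I would invoke Proposition \ref{SPsi}: the transformed ideal element $\widehat\psi = S\psi$ corresponds to the unique even $\widehat\Psi\in Q'^{(0)}$ displayed in the statement, with the extra right factor $E$ in the odd case inserted precisely so that $\widehat\Psi$ is even, and $\widehat\psi = \widehat\Psi t$. Since the generators are unchanged, the idempotent $t$ is unchanged, so the reduction of Proposition \ref{sum2} applies verbatim to the transformed operator $\widehat F$ of \eqref{TransfDir-Hest}, giving $\widehat F\,t = \big(\sum_\mu e^\mu\widehat A_\mu E + m\widehat\Psi I\big)t$, which in turn equals the transformed Dirac operator evaluated at $\widehat\psi$. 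By the previous paragraph this vanishes, so $\widehat F\,t = 0$; and since $\widehat\Psi\in Q'^{(0)}$ forces $\widehat F\in Q'^{(0)}$, Lemma \ref{Yt=0} gives $\widehat F = 0$, i.e. $\widehat\Psi$ solves \eqref{TransfDir-Hest}.

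The obstacle I expect is bookkeeping rather than conceptual. One must check that the merging of the two sums from Proposition \ref{sum2} still works for \eqref{TransfDir-Hest}: using that $I=-e^{12}$ commutes with $e^\mu$ for $\mu=3,5,\ldots,2d-3$ and then \eqref{propt}, each second-sum term reduces as $\widehat A_{\mu+1}I e^\mu E\,t = e^{\mu+1}\widehat A_{\mu+1}E\,t$, so it recombines with the first sum into $\sum_{\mu=0}^{2d-1} e^\mu\widehat A_\mu E\,t$. One must also confirm that the $\widehat\Psi$ furnished by Proposition \ref{SPsi} is genuinely even so that Lemma \ref{Yt=0} applies, and, in the case $S\in{\rm Pin}(1,2d-1)\setminus{\rm Spin}(1,2d-1)$, track the trailing factor $E$ using $Et=t$ and the commutativity of $E$ and $I$, exactly as in the four-dimensional argument.
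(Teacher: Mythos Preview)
Your proposal is correct and essentially matches the paper's proof: both multiply the transformed operator $\widehat F$ on the right by $t$, invoke Proposition \ref{sum2} to merge the two sums, use Proposition \ref{SPsi} in the form $\widehat\Psi t = S\Psi t$ (hence $\widehat A_\mu t = S q_\mu^\nu A_\nu t$), apply \eqref{SP} together with $Q=P^{-1}$ to reduce to the original equation multiplied by $t$, and conclude with Lemma \ref{Yt=0}. Your explicit passage through the complex Dirac equation via $It=it$, $Et=t$ is only a presentational variation of the paper's computation, which keeps $E$ and $I$ symbolic throughout.
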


\begin{proof}
To establish the invariance of the Dirac--Hestenes equation, we substitute the transformed variables into the left-hand side of equation \eqref{Dir-Hest} and verify that the resulting expression equals zero. Due to Proposition \ref{SPsi}, the transformed wave function satisfies the relation:
$$\widehat \Psi t = S \Psi t.$$

Let us denote the left-hand side of  transformed Dirac--Hestenes equation~\eqref{TransfDir-Hest} by $\widehat{F}$.
The variable $\widehat{A}_{\mu}$ satisfies the property:
$$\widehat{A}_{\mu} t = S q_{\mu}^{\nu} A_{\nu}t.$$

The generators $\{e^{\mu}\}$ remain unchanged under Lorentz transformation and $\widehat A\in Q'^{(0)}$. Therefore,  we multiply $\widehat F$ by the idempotent $t$ and, using Proposition \ref{sum2}, merge the two sums:
$$\widehat F t =  e^{\mu} \widehat A_{\mu} E t +m\widehat\Psi I t.$$

Substituting the definitions of the transformed variables, we obtain:
$$\widehat F t =  e^{\mu} S q_{\mu}^{\nu}A_{\nu} E t +m S\Psi I t.$$

Putting $S$ and $t$ out of brackets, we rewrite this in the following form:
$$\widehat F t = S\left( S^{-1} e^{\mu} S  q_{\mu}^{\nu}A_{\nu} E +m\Psi I\right) t.$$

Using property \eqref{SP} and the fact that $Q$ is inverse of $P$, we simplify further:
$$\widehat F t = S\left( e^{\nu} A_{\nu} E +m\Psi I\right) t.$$

We can separate the sum into two sums due to Proposition \ref{sum2}. The wave function $\Psi$ is a solution to equation \eqref{Dir-Hest}, that is why:
$$\widehat F t =0.$$

Applying Lemma \ref{Yt=0} to $\widehat F\in Q'^{(0)}$, we conclude that $\widehat F=0$.
\end{proof}

\begin{thm}
     Let us fix the primitive Hermitian idempotent $t$ as in formula \eqref{oddtsemi} (or \eqref{oddtdouble}) and the real algebra $Q'$ be  defined by \eqref{Q2} (or \eqref{Q3}  respectively). Consider the following transformations:
 \begin{align*}
    &x^{\mu} \mapsto \widehat{x^{\mu}} = p_{\nu}^{\mu} x^{\nu}, \quad P=(p_{\nu}^{\mu})\in 
        {\rm SO}(1,2d), \\
   & \partial_{\mu} \mapsto \widehat\partial_{\mu} = q_{\mu}^{\nu} \partial_{\nu}, \quad Q=(q_{\mu}^{\nu})=P^{-1}, \\
   & a_{\mu} \mapsto \widehat{a}_{\mu} = q_{\mu}^{\nu} a_{\nu}, \\
    &e^{\mu} \mapsto \widehat{e^{\mu}}=e^{\mu},
\end{align*}
where $\mu,\,\nu=0,1,2,\ldots,2d$. Also, assume that the element $S$ is related to the matrix $P$ as in \eqref{SP}.
     
     If $\Psi\in Q'^{(0)}$ is a solution of equation \eqref{SemispinorDir-Hest} (or \eqref{DoublespinorDir-Hest} respectively) and $S\in {\rm Spin}(1,2d)$, then the function $\widehat\Psi$
   $$\widehat\Psi = S^0 \Psi +\sum_k S_{\mu_1\,\mu_2\,\ldots\,\mu_k} \Psi e^{\mu_1-1\,\mu_2-1\,\ldots\,\mu_k-1} I^k$$
 is a solution of transformed Dirac--Hestenes equation \eqref{TransfSemispinorDir-Hest} (or \eqref{TransfDoublespinorDir-Hest} respectively).

 \end{thm}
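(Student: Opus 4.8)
The plan is to mirror, almost verbatim, the argument used for the odd-dimensional spinor theorem, the essential new feature being that every manipulation must be carried out after right-multiplication by the idempotent $t$, since in the even case we again do not have $\widehat\Psi = S\Psi$ but only the weaker identity supplied by Proposition~\ref{SPsi}. Because $S\in{\rm Spin}(1,2d)\subset\cl^{(0)}_{1,2d}$ is even, the first case of Proposition~\ref{SPsi} applies and yields both the explicit form of $\widehat\Psi$ stated in the theorem and the key relation $\widehat\Psi\, t = S\Psi\, t$ with $\widehat\Psi\in Q'^{(0)}$. I would fix the semi-spinor data \eqref{Q2}, \eqref{oddtsemi}, \eqref{SemispinorDir-Hest}, \eqref{TransfSemispinorDir-Hest} first and then observe that the double-spinor case \eqref{Q3}, \eqref{oddtdouble}, \eqref{DoublespinorDir-Hest}, \eqref{TransfDoublespinorDir-Hest} is handled identically once the relevant objects are substituted, because Proposition~\ref{sum2}, Proposition~\ref{SPsi}, and Lemma~\ref{Yt=0} are each stated uniformly for all three idempotent/equation forms.

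First I would record the behaviour of the auxiliary quantity $\widehat A_\mu = \widehat\partial_\mu\widehat\Psi + \widehat\Psi\widehat a_\mu I$ against $t$. Writing $\widehat\partial_\mu = q^\nu_\mu\partial_\nu$ and $\widehat a_\mu = q^\nu_\mu a_\nu$, using that $S$ and $t$ are constant, and using $It = it$ from \eqref{ItEt} to move the scalar $i$ past $\widehat\Psi$, one obtains $\widehat A_\mu\, t = S\,q^\nu_\mu A_\nu\, t$, exactly as in the odd case. Since $\widehat\Psi\in Q'^{(0)}$ forces $\widehat A_\mu\in Q'^{(0)}$ and the generators are unchanged, Proposition~\ref{sum2} applies to the left-hand side $\widehat F$ of the transformed equation and collapses its two summations into one:
\[
\widehat F\, t = \Bigl(e^\mu\widehat A_\mu E + m\widehat\Psi I\Bigr)t .
\]

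Then I would substitute $\widehat A_\mu\, t = S q^\nu_\mu A_\nu\, t$ and $\widehat\Psi\, t = S\Psi\, t$, factor $S$ out on the left and $t$ on the right, and invoke relation~\eqref{SP} together with $q^\nu_\mu p^\mu_\lambda = \delta^\nu_\lambda$ (from $Q=P^{-1}$), so that $S^{-1}e^\mu S\,q^\nu_\mu = q^\nu_\mu p^\mu_\lambda e^\lambda = e^\nu$, to reach $\widehat F\, t = S\bigl(e^\nu A_\nu E + m\Psi I\bigr)t$. Applying Proposition~\ref{sum2} in reverse identifies the bracketed expression, multiplied by $t$, with $F\,t$, where $F$ is the left-hand side of the original equation~\eqref{SemispinorDir-Hest} (or \eqref{DoublespinorDir-Hest}); since $\Psi$ solves that equation, $F=0$, whence $\widehat F\, t = S\cdot 0 = 0$. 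Finally, because $\widehat F\in Q'^{(0)}$, Lemma~\ref{Yt=0} strips the idempotent and gives $\widehat F = 0$, which is the claim.

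The main obstacle is conceptual rather than computational: one cannot substitute $\widehat\Psi = S\Psi$, because in even dimensions $S$ generally contains the generators $e^4,e^6,\ldots,e^{2d-2}\notin Q'$, so $S\Psi\notin Q'^{(0)}$ and the genuine $\widehat\Psi$ is the corrected element of Proposition~\ref{SPsi}. Consequently the entire reduction is valid only after multiplication by $t$, and the load-bearing step is the final appeal to Lemma~\ref{Yt=0}; without it the identity $\widehat F\, t = 0$ would not yield $\widehat F = 0$. A secondary point to verify is that Proposition~\ref{sum2} absorbs the structural differences between the two equations (the extra index $\mu=2d-1$ in the second sum of the semi-spinor case, respectively the extra generator $e^{2d}$ in the first sum of the double-spinor case), which it does by construction, its statement being tailored to each of \eqref{SemispinorDir-Hest} and \eqref{DoublespinorDir-Hest}. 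Note also that, in contrast with the odd case, only $S\in{\rm Spin}(1,2d)$ arises here, consistent with restricting to ${\rm SO}(1,2d)$ in even dimensions.
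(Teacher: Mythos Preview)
Your proposal is correct and follows precisely the approach of the paper: the paper omits an explicit proof for this theorem, evidently intending the reader to re-run the argument of the preceding $n=2d-1$ theorem with the appropriate substitutions, which is exactly what you do. Your identification of the key ingredients---Proposition~\ref{SPsi} giving $\widehat\Psi\,t = S\Psi\,t$, Proposition~\ref{sum2} collapsing and then re-expanding the sums, relation~\eqref{SP} with $Q=P^{-1}$, and the final appeal to Lemma~\ref{Yt=0}---matches the paper's template step for step, and your remarks on why only ${\rm Spin}(1,2d)$ appears and how the semi-spinor/double-spinor variants are absorbed by the uniform statements of the auxiliary results are accurate.
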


 In this section, we have demonstrated the Lorentz invariance of the multidimensional Dirac--Hestenes equation using two approaches: the tensor approach and the spinor approach. The tensor approach requires additional invariants, whereas the spinor approach eliminates this necessity. However, in the spinor approach, the element $S$ of the spin group must be modified as described in Proposition \ref{SPsi}. This modification ensures compatibility with the structure of the algebra $Q'^{(0)}$, to which solutions of the multidimensional Dirac--Hestenes equation belong.

\section{Conclusion}
\label{sec:Con}

Lorentz transformations are defined as the orthogonal transformations preserving the quadratic form of signature $(1,n)$. Due to the homomorphism between the pseudo-orthogonal and spin groups, transformations can be formulated in two distinct ways for the Dirac--Hestenes equation: the tensor and the spinor approaches. In the tensor approach, the generators of the geometric algebra transform under the pseudo-orthogonal group. In the spinor approach, the wave function itself transforms under the spin group. However, the specific realization of these transformations depends on the parity of $n$, owing to the structure of the kernel of the homomorphism between the pseudo-orthogonal and spin groups. For the case $n=2d$, we have restricted our analysis to the proper subgroup ${\rm Spin}(1,2d)$ rather than the full ${\rm Pin}(1,2d)$ group to ensure a consistent and well-defined transformation law. 

In this paper, we have investigated the properties of the multidimensional Dirac--Hestenes equation in the real subalgebra $Q'^{(0)}$ of the geometric algebra $\cl_{1,n}$. We have explored both tensor and spinor approaches to analyze the behavior of the equation under Lorentz transformations, showing that the equation remains invariant in both formalisms. The tensor approach requires additional invariants, while the spinor approach, although more algebraically involved, naturally preserves the structure of the wave function. Moreover, we have derived an explicit expression for the transformed wave function in the spinor approach. This result is nontrivial, as the wave function belongs to the algebra $Q'^{(0)}$, while the spin group elements do not.

\subsection*{Acknowledgment}

This work is supported by the Russian Science Foundation (project 23-71-10028), https://rscf.ru/en/project/23-71-10028/.

\medskip

\noindent{\bf Data availability} Data sharing is not applicable to this article as no datasets were generated or analyzed during the current study.

\medskip

\noindent{\bf Declarations}\\
\noindent{\bf Conflict of interest} The authors declare that they have no conflict of interest.

% ------------------------------------------------------------------------
\bibliographystyle{spmpsci}
\bibliography{myBibLib} 
% ------------------------------------------------------------------------
\end{document}